\newcommand{\bea}{\begin{eqnarray}}
\newcommand{\eea}{\end{eqnarray}}
\newcommand{\ba}{\begin{eqnarray}}
\newcommand{\ea}{\end{eqnarray}}
\newcommand{\beq}{\begin{equation}}
\newcommand{\eeq}{\end{equation} }
\newcommand{\beqa}{\begin{eqnarray}}
\newcommand{\eeqa}{\end{eqnarray}}
\newcommand{\beqar}{\begin{eqnarray*}}
\newcommand{\eeqar}{\end{eqnarray*}}
\newcommand{\be}{\begin{equation}}
\newcommand{\ee}{\end{equation}}
\newcommand{\diff}{\mathrm{d}}
\newcommand{\Rho}{\mathrm{P}}
\newtheorem{theorem}{Theorem}
\newtheorem{defi}{Definition}
\newtheorem{proposition}{Proposition}
\newtheorem{example}{Example}
\newcommand{\I}{\mathcal{I}}
\newcommand{\X}{\mathcal{X}}
\newcommand{\R}{\mathcal{R}}
\newcommand{\Z}{\mathcal{Z}}
\newcommand{\cS}{\mathcal{S}}
\newcommand{\sZ}{\mathsf{Z}}
\newcommand{\sW}{\mathsf{W}}
\newcommand{\sX}{\mathsf{X}}
\newcommand{\sY}{\mathsf{Y}}
\newcommand{\cL}{\mathcal{L}}
\definecolor{shadecolor}{rgb}{.25,.25,.25}
\title{ \boldmath On the classification of Generalized Quasitopological Gravities
}
\author[a,b]{Javier Moreno,}
\author[c]{\'Angel J. Murcia}
\affiliation[a]{Department of Physics and Haifa Research Center for Theoretical Physics and Astrophysics\\
University of Haifa, Haifa 31905, Israel \vspace{0.1cm}}
\affiliation[b]{Department of Physics, Technion, Israel Institute of Technology,
Haifa, 32000, Israel \vspace{0.1cm}}
\affiliation[c]{Istituto Nazionale di Fisica Nucleare, Sezione di Padova, Via Marzolo 8, 35131 Padova, Italy \vspace{0.1cm}}
\emailAdd{jmoreno@campus.haifa.ac.il}
\emailAdd{angel.murcia@pd.infn.it}
\date{\today}
\abstract{Generalized Quasitopological Gravities (GQTGs) are higher-order extensions of Einstein gravity in $D$ dimensions satisfying a number of interesting properties, such as possessing second-order linearized equations of motion on top of maximally symmetric backgrounds, admitting non-hairy generalizations of the Schwarzschild-Tangherlini black hole which are characterized by a single metric function or forming a perturbative spanning set of the space of effective theories of gravity. In this work, we classify all inequivalent GQTGs at all curvature orders $n$ and spacetime dimension $D \geq 4$. This is achieved after the explicit construction of a dictionary that allows the uplift of expressions evaluated on a single-function static and spherically symmetric ansatz into fully covariant ones. On the one hand, applying such prescription for $D \geq 5$, we find the explicit covariant form of the unique inequivalent Quasitopological Gravity that exists at each $n$ and, for the first time, the covariant expressions of the $n-2$ inequivalent proper GQTGs existing at every curvature order $n$. On the other hand, for $D=4$, we are able to provide the first rigorous proof of the fact that there is one and only one (proper) inequivalent GQTG at each curvature order $n$, deriving along the way a simple expression for such four-dimensional representative at every order $n$.



 }
\begin{document} 
\maketitle
\flushbottom


\section{Introduction}
\label{sec:Introduction}

Higher-order gravities --- or equivalently, higher-curvature gravities or higher-derivative gravities --- are extensions of General Relativity in which the Einstein-Hilbert action is supplemented by terms of higher-order in the curvature of spacetime.  They are natural modifications of Einstein gravity with which to capture and parametrize the effects of an underlying UV-complete theory (Quantum Gravity). As a matter of fact, they arise as stringy corrections in the low-energy effective actions of the different versions of String Theory \cite{Callan:1985ia,Zwiebach:1985uq,Grisaru:1986vi,Gross:1986iv,Gross:1986mw,Bergshoeff:1989de}. Also, they appear naturally  from an Effective Field Theory (EFT) perspective, in which one considers all possible terms which are compatible with the symmetries of theory \cite{Weinberg:1995mt}. When dealing with purely gravitational actions, such EFT approach corresponds precisely to the introduction of higher-curvature terms, which preserve --- of course --- diffeomorphism invariance.

Thus, in recent years, higher-order gravities have acquired a genuine interest in the literature, becoming a topic of research in their own right. The reasons for this ever-growing attention have been manifold. Firstly, the introduction of higher-derivative terms generically gives rise to renormalizable actions \cite{Stelle:1976gc, Stelle:1977ry}, so that it is interesting to examine how higher-derivative terms correct Einstein-gravity solutions in situations of sufficiently large curvature, such as black holes or the early Universe \cite{Starobinsky:1980te,Wheeler:1985nh,Boulware:1985wk,Wiltshire:1988uq,Myers:1988ze,Myers:2010ru,Lu:2015cqa,Bueno:2016lrh,Bueno:2017sui}. Secondly, higher-order gravities have turned out to be extremely useful in the holographic context, via the AdS/CFT correspondence \cite{Maldacena:1997re,Gubser:1998bc,Witten:1998qj}. Apart from capturing finite $N$ and finite-coupling effects within the canonical correspondence between IIB String Theory and $\mathcal{N}=4$ Super-Yang-Mills theory \cite{Gubser:1998nz,Buchel:2004di,Myers:2008yi}, they allow one to explore CFTs whose correlators adopt the most generic form permitted by conformal symmetry \cite{Hofman:2008ar,deBoer:2009pn,Camanho:2009vw,Buchel:2009sk,deBoer:2009gx,Myers:2010jv,Bueno:2018xqc,Cano:2022ord} or to discover universal features of generic CFTs \cite{Brigante:2007nu,Myers:2008yi,Cai:2008ph,Myers:2010tj,Myers:2010xs,Perlmutter:2013gua,Mezei:2014zla,Bueno:2015rda,Bueno:2015xda,Chu:2016tps,Dey:2016pei,Bueno:2018yzo,Li:2018drw,Bueno:2020uxs,Bueno:2020odt,Anastasiou:2021swo,Anastasiou:2021jcv,Bueno:2022jbl,Murcia:2023zok}. And, in third place, given the increasing accuracy of current gravitational-wave observations from neutron-star and black-hole binaries mergers \cite{LIGOScientific:2018mvr,LIGOScientific:2020ibl,LIGOScientific:2021djp} and the potential measurements that could be carried out from the imaging of black-hole shadows \cite{EventHorizonTelescope:2019dse,EventHorizonTelescope:2022xnr}, higher-order gravities may be extremely helpful in the search and study of deviations from Einstein gravity in astrophysical observables \cite{Cardoso:2009pk,Blazquez-Salcedo:2017txk,Berti:2018cxi,Cardoso:2018ptl,Cano:2021myl,Silva:2022srr}.

Well-known examples of higher-curvature gravities that have been exhaustively studied in the literature are Lovelock gravities \cite{lovelock1970divergence,Lovelock:1971yv} and $f(R)$ gravities \cite{Buchdahl:1970ynr,Sotiriou:2008rp,DeFelice:2010aj}. On the one hand, Lovelock theories correspond to the most general diffeomorphism-invariant theories which possess gravitational equations of motion of second derivative order. Nevertheless, they reduce to Einstein gravity in four dimensions, up to topological terms. On the other hand, $f(R)$ gravities are defined as those theories whose Lagrangian is an arbitrary function of the Ricci scalar. In this case, they turn out to be classically equivalent to Brans-Dicke theories \cite{Teyssandier:1983zz,Barrow:1988xh} and, furthermore, every asymptotically flat vacuum solution of GR is also a solution of $f(R)$ gravities\footnote{Whenever it is assumed that $f(R)$ admits a polynomial expansion around $R=0$ and satisfies that $f(0)=0$.}, so they do not actually introduce new purely gravitational phenomena. Therefore, it is necessary to consider more general higher-curvature theories in order to correct purely gravitational four-dimensional GR solutions. However, if one deals with a generic higher-order gravity, the corresponding equations of motion will be of fourth order in derivatives\footnote{If covariant derivatives of the curvature appear, such order is further increased.} and its resolution becomes an extremely daunting (if not impossible) task. 

Such discouraging landscape could be partly blurred away if it was possible to find higher-derivative gravities with second-order equations of motion, at least, in highly symmetric configurations, such as static and spherically symmetric (SSS) ones. Indeed, they correspond to realistic situations but, still, possess enough symmetry so as to simplify the subsequent computations. This eclectic posture has turned out to be extraordinarily successful. First, the class of Quasitopological Gravities (QTGs) was identified \cite{Oliva:2010eb,Myers:2010ru}. They are characterized by admitting SSS solutions which are fully specified by a single function --- like the Schwarzschild-Tangherlini GR solution --- whose equation of motion is algebraic. Apart from Einstein gravity and Lovelock theories, further explicit examples of QTGs have been constructed to cubic order in the curvature \cite{Oliva:2010eb,Myers:2010ru}, quartic \cite{Dehghani:2011vu}, quintic \cite{Cisterna:2017umf} and to arbitrary order \cite{Bueno:2019ycr}. However, no QTG exists in four dimensions. This problem can be circumvented if one only demands the existence of SSS solutions characterized by a single function whose equation is at most of second order in derivatives. This defines the class of Generalized Quasitopological Gravities (GQTGs) \cite{Bueno:2016xff,Hennigar:2017ego,Bueno:2017sui,Ahmed:2017jod}, which encompasses all QTGs. Interestingly enough, there are non-trivial instances of these theories in four dimensions, such as the celebrated Einsteinian Cubic Gravity \cite{Bueno:2016xff}. Also, it has been possible to construct examples of proper GQTGs (i.e., not belonging to the QTG class) in every dimension $D \geq 5$ and to all orders in curvature \cite{Bueno:2019ycr}. Additionally, GQTGs possess linearized equations of motion around maximally symmetric backgrounds \cite{Bueno:2016xff,Hennigar:2016gkm,Bueno:2016lrh,Hennigar:2017ego,Bueno:2017sui,Ahmed:2017jod}, allow the exact computation of black-hole thermodynamics \cite{Oliva:2010eb,Myers:2010ru,Hennigar:2017ego,Bueno:2017qce,Cisterna:2017umf,Bueno:2017sui,Bueno:2017qce} and form a basis for the whole space of higher-curvature gravities if field redefinitions of the metric are considered \cite{Bueno:2019ltp}.

Therefore, a complete characterization of GQTGs at all curvature orders and for every dimension\footnote{In $D=3$, all GQTG theories except for Einstein gravity vanish when evaluated on the single-function SSS ansatz \cite{Bueno:2022lhf}.} $D \geq 4$ is clearly of interest.\footnote{We shall assume the absence of covariant derivatives of the curvature in the Lagrangian.} Some steps in this direction has been carried out in the literature, such as the derivation of an explicit --- although somewhat involved --- formula for a QTG at every curvature order for $D \geq 5$ \cite{Bueno:2019ycr}, the construction of a proper GQTG --- through a considerably difficult process as well ---  at every order and dimension $D \geq 4$ \cite{Bueno:2019ycr} and the characterization of all GQTGs when evaluated on an SSS ansatz specified by a single function \cite{Bueno:2022res}. In this Ref., it was explicitly proven that, up to the addition of densities that vanish when evaluated on the single-function SSS ansatz (trivial GQTGs),  there exist $n-1$ GQTGs at every curvature order $n$ in $D \geq 5$, only one of them being of the quasitopological type. However, the finding of the covariant form of all such $n-1$ GQTGs (up to trivial GQTGs) has remained elusive. Furthermore, despite the strong evidence that there exists a unique GQTG at each curvature order for $D=4$ (again, up to trivial GQTGs), no rigorous proof of this has been provided yet. 


It is the purpose of this manuscript to fill the previous gaps and conclude the classification of all GQTGs for every curvature order $n$ and every $D \geq 4$. Among other reasons, this is motivated by the fact that this would allow one understand the nature of GQTGs away from the SSS regime, for instance considering cosmological solutions or rotating black holes, as in \cite{Arciniega:2018fxj,Arciniega:2018tnn,Cisterna:2018tgx,Cano:2019ozf,Adair:2020vso,Cano:2020oaa,Cano:2023dyg}. After the development of a dictionary that allows one to pass from expressions evaluated on the single-function SSS ansatz to covariant (or off-shell) expressions made of combinations of specific curvature invariants, we are able to construct (up to the addition of trivial GQTGs) a simple expression for the explicit covariant form of the unique QTG existing at each $n$ and $D \geq 5$ --- see Theorem \ref{theo:qtgs} ---, derive for the first time the off-shell expression of all $n-2$ proper GQTGs at every $n$ and $D \geq 5$ --- see Theorem \ref{theo:gqtgs} --- and rigorously prove that there exists a unique (proper) GQTG in $D \geq 4$ at each curvature order, also providing its explicit covariant form --- see Theorem \ref{theo:d4gqtgs}. 


The article is organized as follows. In Section \ref{sec:GQTGs} we review the definition of GQTGs and mention their most salient features. Next, in Section \ref{sec:offon} we present a procedure by which to uplift expressions evaluated on a single-function SSS ansatz to fully covariant combinations of curvature invariants in $D \geq 5$, see Proposition \ref{prop:wzr} and the dictionary given in Eqs. \eqref{eq:dic1} to \eqref{eq:dic5}. Afterwards, using such dictionary, in Section \ref{sec:all} all QTGs and proper GQTGs (up to trivial GQTGs) are presented at each curvature order and $D \geq 5$, analyzing the number of curvature invariants needed to construct them. Then, in Section \ref{sec:gqtgs4} we adapt the aforementioned dictionary to $D=4$, prove that there exists a unique GQTG at each $n$ in four dimensions up to the addition of trivial GQTGs and construct an explicit example of GQTG at each curvature order. Finally, we provide some conclusions and future directions, including some supplementary material in the appendix.

\subsubsection*{Note on conventions}

All along the document we will use the mostly plus signature for the metric $g_{ab}$ and the conventions from Wald's book \cite{Wald:1984rg} for the curvature. In particular, the Riemann curvature tensor $R_{abc}{}^d$ associated to the Levi-Civita connection of $g_{ab}$ is given by:
\begin{equation}
R_{abc}{}^{d}=-2\partial_{[a} \Gamma^{d}{}_{b]c}+ 2\Gamma^{e}{}_{c[a} \Gamma^{d}{}_{b]e}\,,
\end{equation}
where $\Gamma^{c}{}_{ab}$ stands for the Christoffel symbols and where $[ab]$ indicates as usual the antisymmetric part --- for a two-tensor $A_{ab}$, $A_{[ab]}=1/2(A_{ab}-A_{ba})$. Our definitions for the Ricci tensor $R_{ab}$ and the Ricci scalar $R$ are:
\begin{equation}
R_{ab}=R_{acb}{}^c\, , \qquad R=R_{a}{}^{a}\,. 
\end{equation}
We will also be using the Weyl tensor $W_{abc}{}^d$ and the traceless Ricci tensor $Z_{ab}$. They are defined as follows for a $D$-dimensional manifold:
\begin{align}
W_{abcd}&=R_{abcd}-\frac{2}{D-2}(g_{a[c} R_{d]b}-g_{b[c} R_{d]a})+\frac{2}{(D-1)(D-2)} R g_{a[c}g_{d]b}\,,\\
Z_{ab}&=R_{ab}-\frac{1}{D}g_{ab} R\,.
\end{align}


\section{Generalized Quasitopological Gravities}\label{sec:GQTGs}

Let $\cL(g^{ab},R_{a b c d})$ be a generic diffeomorphism-invariant theory of gravity constructed from\footnote{For the sake of simplicity, we will assume that no covariant derivatives of the curvature appear on the action. We will also assume the absence of parity-breaking terms.} contractions of the Riemann curvature tensor $R_{a b c d}$ associated to a metric $g_{ab}$. Assuming that $\cL$ admits an effective expansion in increasing powers of the curvature (or equivalently, derivatives of the metric), we may write
\begin{equation}
\cL(g^{ab},R_{a b c d})=-2\Lambda+ R+\sum_{n=2}^\infty \sum_{j=1}^{k_{n}} \ell^{2(n-1)}\beta_{n,j}  \mathcal{R}_{(n,j)}\,,
\label{eq:effgrav}
\end{equation}
where $\Lambda$ is the cosmological constant, $\ell$ is a characteristic length scale from which on the effects of higher-derivative terms have to be taken into account, $\beta_{n,j}$ are dimensionless couplings and $\mathcal{R}_{(n,j)}$ stands for the different contractions of curvature tensors at order $2n$ in derivatives that one may construct, each of them labeled by $j$.  

Now, consider a general static and spherically symmetric (SSS) $D$-dimensional configuration. In an appropriate coordinate system, it can be expressed in terms of two unknown functions $N(r)$ and $f(r)$ as follows:
\begin{equation}
\label{eq:sssans}
\diff s_{N,f}^2=-N^2(r) f(r) \diff t^2+ \frac{\diff r^2}{f(r)}+r^2 \diff \Omega^2_{D-2}\,,
\end{equation}
where $\diff \Omega^2_{D-2}$ stands for the round metric of the $(D-2)$-dimensional sphere. Also, let $L_{N,f}= r^{D-2} N(r) \cL \vert_{N,f}$ be the effective Lagrangian evaluated on the SSS ansatz \eqref{eq:sssans} and $L_{f}= L_{1,f}$.
\begin{defi}
A theory $\cL( g^{ab},R_{a b c d})$ is said to be a Generalized Quasitopological Gravity (in short, GQTG) if and only if
\begin{equation}
\frac{\partial L_f}{\partial f}-\frac{\diff}{\diff r} \frac{\partial L_f}{\partial f'}+\frac{\diff^2}{\diff r^2} \frac{\partial L_f}{\partial f''}-\dots=0\,.
\label{eq:gqtgdef}
\end{equation}
\end{defi}
Such definition was put forward first in \cite{Hennigar:2017ego}. By now, the consequences arising from demanding \eqref{eq:gqtgdef} to hold have been extensively explored in the literature (see e.g. the vast number of Refs. presented in the enumeration below). Here we quote the most relevant ones: 
\begin{enumerate}
\item The equations of motion of GQTGs are second order in derivatives when linearized around maximally symmetric backgrounds. As a consequence, they only propagate a massless and traceless graviton on such backgrounds \cite{Bueno:2016xff,Hennigar:2016gkm,Bueno:2016lrh,Hennigar:2017ego,Bueno:2017sui,Ahmed:2017jod,Bueno:2017qce}.

\item They admit (asymptotically flat, de Sitter or Anti-de Sitter) SSS solutions characterized by $N(r)=1$ and having (at most) second-order equation in derivatives for the ansatz function $f(r)$. These solutions are interpreted as natural generalizations of the (asymptotically flat, de Sitter or Anti-de Sitter) Schwarzschild-Tangherlini solution of Einstein gravity, since they turn out to be completely characterized by its Arnowitt-Deser-Misner (ADM) mass \cite{Arnowitt:1960es,Arnowitt:1960zzc,Arnowitt:1961zz}. The equation for $f(r)$ is  obtained by varying $L_{N,f}$ with respect to $N(r)$ and afterwards imposing $N(r)=1$:
\begin{equation}\label{eq:eqN}
\mathcal{E}^{\text{SSS}}=\left. \frac{\partial L_{N,f}}{\partial N}\right \vert_{N=1}-\left.\frac{\diff}{\diff r} \frac{\partial L_{N,f}}{\partial N'}\right \vert_{N=1}+\left.\frac{\diff^2}{\diff r^2} \frac{\partial L_{N,f}}{\partial N''}\right \vert_{N=1}-\dots =0\,.
\end{equation} 
According to the order (in derivatives) of $\mathcal{E}^{\text{SSS}}$, GQTGs may be divided into the two following subfamilies:
\begin{itemize}
\item Those for which $\mathcal{E}^{\text{SSS}}$ is an algebraic equation for $f(r)$. These theories are called \emph{Quasitopological Gravities} (QTGs) \cite{Oliva:2010eb,Myers:2010ru,Dehghani:2011vu,Ahmed:2017jod,Cisterna:2017umf}. They exist for $D \geq 5$.
\item Those for which $\mathcal{E}^{\text{SSS}}$ is of second order in derivatives. The corresponding theories are called \emph{proper GQTGs} (or \emph{genuine GQTGs}) or, whenever no confusion may arise, just \emph{GQTGs} \cite{Bueno:2016xff,Hennigar:2017ego,Bueno:2017sui,Bueno:2022res}. They have been shown to exist for $D \geq 4$.
\end{itemize}
Additionally, we may identify the class of \emph{trivial GQTGs}, defined as those vanishing identically when evaluated on the SSS ansatz \eqref{eq:sssans} with $N(r)=1$.
\item GQTGs admit solutions with $(D-2)$-dimensional hyperbolic and planar sections too (i.e. solutions of the form \eqref{eq:sssans} but replacing $\diff \Omega^2_{D-2}$ by the corresponding hyperbolic or planar metrics), with $N(r)=1$ and second-order equation (at most) for $f(r)$.

\item Black-hole thermodynamics can be computed analytically \cite{Oliva:2010eb,Myers:2010ru,Hennigar:2017ego,Bueno:2017qce,Cisterna:2017umf,Bueno:2017sui,Bueno:2019ycr,Cano:2019ozf,Frassino:2020zuv,Khodabakhshi:2020hny,KordZangeneh:2020qeg,Bueno:2022res}. This is also the case for the extended thermodynamics approach, in which the the cosmological constant is interpreted as the pressure of the black hole \cite{Mir:2019ecg,Mir:2019rik}.  


\item Any gravitational effective action  of the form \eqref{eq:effgrav} can be mapped, perturbatively order by order, to a certain GQTG \cite{Bueno:2019ltp}. Thus, GQTGs form a spanning set of the space of effective field theories of gravity which is specially suited for the study of SSS configurations.

\item There are particular subsets of GQTGs that additionally allow for reduction-of-order in the equations of motions on other gravitational configurations, like Taub-NUT/Bolt metrics \cite{Bueno:2018uoy}, wormhole geometries \cite{Mehdizadeh:2019qvc,Mustafa:2020qjo} or cosmological backgrounds \cite{Feng:2017tev,Arciniega:2018fxj,Cisterna:2018tgx,Arciniega:2018tnn,Quiros:2020uhr,Marciu:2020ski,Quiros:2020eim,Edelstein:2020nhg,Edelstein:2020lgv,BeltranJimenez:2020lee,Cano:2020oaa}.
\end{enumerate}

Apart from the aforementioned properties, there are two additional aspects that are worth to mention. On the one hand, it turns out that GQTGs are very useful in the holographic context as well, since they allow one to explore CFTs whose correlators take the most generic form allowed by conformal symmetry \cite{Myers:2010jv,Bueno:2018xqc,Li:2018drw,Cano:2022ord}, identify novel universal relations that hold for arbitrary CFTs \cite{Bueno:2018yzo,Bueno:2020odt,Bueno:2022jbl}, examine aspects of holographic entanglement entropies \cite{Dey:2016pei,Bueno:2020uxs,Caceres:2020jrf,Anastasiou:2022pzm} or study generic features of holographic transport and superconductivity \cite{Mir:2019ecg,Mir:2019rik,Edelstein:2022xlb,Murcia:2023zok}. On the other hand, it is possible to extend the definition for GQTGs provided in \eqref{eq:gqtgdef} to allow for the inclusion of non-minimally coupled matter, such as a $\mathrm{U}(1)$ vector field \cite{Cano:2020ezi,Cano:2020qhy,Bueno:2021krl,Cano:2022ord,Bueno:2022ewf}. 

\begin{defi}
\label{def:gqtgeq}
Let $\cL^{(1)}$ and $\cL^{(2)}$ be two non-trivial GQTGs. Assume the SSS ansatz \eqref{eq:sssans} with $N(r)=1$ and let $\mathcal{E}^{\text{SSS}}_1$ and $\mathcal{E}^{\text{SSS}}_2$ be the corresponding equations of motion for $f(r)$. Then, $\cL^{(1)}$ and $\cL^{(2)}$ are said to be equivalent if $\mathcal{E}^{\text{SSS}}_1$ and $\mathcal{E}^{\text{SSS}}_2$ are linearly dependent. Otherwise, we say they are inequivalent. 
\end{defi}
Definition \eqref{def:gqtgeq} naturally introduces classes of equivalence of GQTGs, two GQTGs belonging to the same class if and only if they are equivalent. When working with SSS configurations \eqref{eq:sssans}, it suffices to work with a representative of each equivalence class. By convention, we will also assume that trivial GQTGs are all equivalent to $0$ in the sense we have just introduced.

For $D \geq 5$, it was proven in \cite{Bueno:2022res} that there exist $n-1$ inequivalent GQTGs constructed solely from contractions of $n$ curvature tensors, one of them being a QTG and the remaining $n-2$ ones being proper (or genuine) GQTGs. At $D=4$ instead, there seems to be a single proper GQTG at each order, although no rigorous proof has been offered up to date, to the best of our knowledge. For $D=3$, no non-trivial GQTGs exist (apart from the Einstein-Hilbert term, of course).



Taking into account the previous considerations and that the GQTG condition expressed in \eqref{eq:gqtgdef} is linear, the most general combination of all inequivalent GQTGs can be expressed as follows:
\begin{align}
\cL_{\mathrm{GQTG}}&=-2 \Lambda +R+ \sum_{n=2}^\infty  \ell^{2(n-1)} \alpha_{n}   \mathcal{Z}_{(n)}+\sum_{n=3}^\infty \sum_{j=2}^{n-1} \ell^{2(n-1)}\beta_{n,j}  \mathcal{S}_{(n,j)}\,, \quad D \geq 5\,, \\ 
\cL_{\mathrm{GQTG}}&=-2 \Lambda +R+\sum_{n=3}^\infty \ell^{2(n-1)}\beta_{n}  \mathcal{S}_{(n)}\,, \quad D=4\,,
\end{align}
where $\mathcal{Z}_{(n)}$ and $\mathcal{S}_{(n,j)}$ stand for particular representatives of QTGs and proper GQTGs,\footnote{Observe that the sum on proper GQTGs starts from $n=3$, since no proper quadratic GQTG exists.} respectively. Consequently, finding the expression of all inequivalent GQTGs requires the knowledge of a representative of every equivalence class of GQTGs at every order in the curvature.

GQTGs comprise the well-known Lovelock gravities\footnote{In particular, whenever a Lovelock gravity of order $n$ in spacetime dimension $D \geq 5$ does not vanish identically nor it is topological, it belongs to the equivalence class of QTGs of order $n$.} \cite{lovelock1970divergence,Lovelock:1971yv,Wheeler:1985nh,Boulware:1985wk,Cai:2001dz,Padmanabhan:2013xyr} (which contains Einstein gravity), but there exist of course explicit instances of GQTGs which are not Lovelock theories. Let us show an example of a (non-Lovelock) QTG and of a proper GQTG.
\begin{example}
Let us consider the following density\footnote{In this section, we will use some appropriate superindices on densities to make reference to the initials of the authors that introduced them.} in $D\geq 5$ \cite{Myers:2010ru}:
\begin{align}
\mathcal{Z}_{(3)}^{\mathrm{MR}}=&{{{R_a}^b}_c{}^d} {{{R_b}^e}_d}^f {{{R_e}^a}_f}^c
               + \frac{1}{(2D - 3)(D - 4)} \left[ \frac{3(3D - 8)}{8} R_{a b c d} R^{a b c d} R \nonumber \right. \\ \label{eq:Zmyers}
              & \left. - \frac{3(3D-4)}{2} {R_a}^c {R_c}^a R  - 3(D-2) R_{a c b d} {R^{a c b}}_e R^{d e} + 3D R_{a c b d} R^{a b} R^{c d} \right. \\ \nonumber
              & \left.
                + 6(D-2) {R_a}^c {R_c}^b {R_b}^a  + \frac{3D}{8} R^3 \right]\, .
\end{align}
It can be checked that the theory defined by \eqref{eq:Zmyers} satisfies the condition \eqref{eq:gqtgdef} when evaluated on the ansatz \eqref{eq:sssans} with $N(r)=1$, so it belongs to the GQTG family. Further inspection reveals that the subsequent equation of motion for $f(r)$ can be integrated into an algebraic equation, so \eqref{eq:Zmyers} provides an example of a QTG. It is not trivial for $D \geq 5$ and was the first non-trivial and non-Lovelock instance of a GQTG ever given in the literature.
\end{example}

\begin{example}
Let us now write the following density \cite{Hennigar:2017ego}:
\begin{align}
\nonumber
\mathcal{S}_{(3)}^{\mathrm{HKM}}&=+14 R_{a\ b}^{\ c \ d}R_{c\ d}^{\ e \ f}R_{e\ f}^{\ a \ b}+2 R_{abcd} R^{abc}_{\ \ \ e} R^{de}-\frac{(38-29D+4 D^2)}{4(D-2)(2D-1)}R_{abcd} R^{abcd} R\\ \label{eq:sdgqtg} &- \frac{2(4D^2+9D-30)}{(D-2)(2D-1)}R_{abcd} R^{ac} R^{bd}- \frac{4(2D^2-35D+66)}{3(D-2)(2D-1)} R_{a}^{b}R_{b}^{c}R_{c}^{a}\\ \nonumber & +\frac{(4 D^2-21 D+34)}{(D-2)(2D-1)}R_{ab} R^{ab} R-\frac{(4D^2-13D+30)}{12(D-2)(2D-1)}R^3\,.
\end{align}
This theory is seen to fulfill the requirement \eqref{eq:gqtgdef} when evaluated on the SSS ansatz \eqref{eq:sssans} with $N(r)=1$, showing that it forms part of the GQTG family. The equation of motion for $f(r)$ can be integrated into an equation of second order in derivatives, so \eqref{eq:sdgqtg} defines a proper GQTG. For $D=4$, $\mathcal{S}^{\mathrm{HKM}}$ is equivalent (as a GQTG) to the following theory:
\begin{equation}
\mathcal{P}=12 R_{a\ b}^{\ c \ d}R_{c\ d}^{\ e \ f}R_{e\ f}^{\ a \ b}+R_{ab}{}^{cd}R_{cd}{}^{ef}R_{ef}{}^{ab}-12R_{abcd}R^{ac}R^{bd}+8R_{a}^{b}R_{b}^{c}R_{c}^{a}\, ,
\label{eq:ECG}
\end{equation}
The theory \eqref{eq:ECG} was first identified in \cite{Bueno:2016lrh} and called Einstenian Cubic Gravity. This was the first example of proper GQTG ever provided in the literature.
\end{example}

The previous examples are of cubic order in the curvature, but GQTGs of higher-order are known. On the one hand, recurrent relations to construct QTGs and one single example of a proper GQTG at every order have been found \cite{Bueno:2019ycr}. For the benefit of the reader, let us present them here:
\begin{align}
\label{eq:zrecu}
&\hspace{-0.2cm}\Z_{(n+5)}^{\mathrm{BCH}} =\frac{3(n+3) \Z_{(1)}^{\mathrm{BCH}} \Z_{(n+4)}^{\mathrm{BCH}}}{D(1-D)(n+1)}+\frac{3(n+4) \Z_{(2)}^{\mathrm{BCH}} \Z_{(n+3)}^{\mathrm{BCH}}}{D(D-1)n}-\frac{(n+3)(n+4)\Z_{(3)}^{\mathrm{BCH}} \Z_{(n+2)}^{\mathrm{BCH}}}{n D(D-1)(n+1)}\,, \\ & \cS_{(n+5)}^{\mathrm{BCH}}=\frac{3(n+3) \cS_{(1)}^{\mathrm{BCH}} \cS_{(n+4)}^{\mathrm{BCH}}}{4(1-D)(n+1)}+\frac{3(n+4) \cS_{(2)}^{\mathrm{BCH}} \cS_{(n+3)}^{\mathrm{BCH}}}{4(D-1)n}-\frac{(n+3)(n+4)\cS_{(3)}^{\mathrm{BCH}} \cS_{(n+2)}^{\mathrm{BCH}}}{4n(D-1)(n+1)}\,,
\label{eq:srecu}
\end{align}
where the recurrent relations begin with the densities $\mathcal{Z}_{(m)}^{\mathrm{BCH}}$ and $\mathcal{S}_{(m)}^{\mathrm{BCH}}$ with $m=1,2,3,4,5$ as given in \cite{Bueno:2019ycr}. Amusingly, equations \eqref{eq:zrecu} and \eqref{eq:srecu} are formally equivalent, just differing by an innocent prefactor. On the other hand, in this Ref. an explicit formula for a QTG at each order $n$ in $D \geq 5$ and the expression for a proper GQTG at every curvature order in $D \geq 4$ were derived. However, these formulae are somewhat involved and it would be desirable to have at disposal more manageable and more direct expressions for them.  

Furthermore, it turns out that the covariant form (i.e. not evaluated in any particular ansatz) for a representative of each equivalence class of GQTGs at every order $n$ and for generic $D \geq 5$ has not been obtained yet in the literature. Although their expression when evaluated on the reduced SSS ansatz \eqref{eq:sssans} with $N(r)=1$ is known \cite{Bueno:2022res}, an explicit covariant characterization of all inequivalent GQTGs is clearly of relevance when it comes to the study of gravitational configurations that do not possess such static and spherical symmetry. Consequently, we devote ourselves to the finding  of covariant expressions for representatives of each equivalence class of GQTGs at every order $n$ in the curvature.

\section{Reconstructing off-shell densities from on-shell densities in $D \geq 5$}\label{sec:offon}

Let us consider the reduced SSS ansatz \eqref{eq:sssans} with $N(r)=1$:
\begin{equation}
\diff s_{f}^2=- f(r) \diff t^2+ \frac{\diff r^2}{f(r)}+r^2 \diff \Omega^2_{D-2}\,.
\label{eq:redsss}
\end{equation}
Define  
\begin{equation}
\label{eq:defabc}
A=\frac{f''(r)}{2}\, , \quad B = -\frac{f'(r)}{2r}\, , \quad \psi=\frac{1-f(r)}{r^2}\,.
\end{equation}
For every $D \geq 5$, we may always pick \cite{Bueno:2022res} GQTG representatives $\Z_{(n)}$ and $S_{(n,j)}$, with $j=2,...,n-1$, whose evaluation on the reduced SSS ansatz \eqref{eq:redsss} reads:\footnote{Notice that we employed a basis for $\mathcal{S}_{(n,j)}$ different from the one presented in Equation (37) in \cite{Bueno:2022res}  (see published version, in v1 of arXiv there seems to be a small typo in the coefficient of their $\tau_{(n,j+1)}$, which should read $(j-1)(Dj-4n)$), since their $\tau_{(n,j)}$ turn out not to provide a set of $n-2$ inequivalent GQTG for certain values of $n$ and $D$. An explicit example of this issue can be seen for $n=5$ and $D=5$, after noticing that the subsequent proper GQTGs labeled by $j=3$ ans $j=4$ coincide identically.}
\begin{align}
\label{eq:zos}
\Z_{(n)} \vert_f&=\frac{1}{r^{D-2}} \frac{\diff}{\diff r}\left [ r^{D-1}\left( (2n-D)\tau_{(n,0)}-2n\tau_{(n,1)}\right) \right]\,,\\ \label{eq:sos}
\cS_{(n,j)} \vert_f&=\frac{1}{r^{D-2}} \frac{\diff}{\diff r}\left [ r^{D-1}\left (\left(2-\frac{D}{2n}(j+1)\right)\tau_{(n,0)}-(j+1)\tau_{(n,j)}+(j-1)\tau_{(n,j+1)}\right) \right],
\end{align}
where $n\geq 1$ for $\Z_{(n)}\vert_f$ and $j=2,\dots,n-1$ with $n \geq 3$ for $\cS_{(n,j)}\vert_f$. We have also defined
\begin{equation}
\tau_{(n,k)}\equiv\psi^{n-k} B^{k}\,,\quad k=0, \dots,n\,.
\end{equation}
Now, it is natural to wonder if it is possible to find an algorithm with which to \emph{uplift} the \emph{on-shell}\footnote{Throughout this document, \emph{on-shell} refers to ``evaluated on the particular SSS ansatz \eqref{eq:redsss}''.} expressions \eqref{eq:zos} and \eqref{eq:sos} into fully covariant or \emph{off-shell} expressions. This way, we would achieve to find a representative for every equivalence class of GQTGs at every order in the curvature.

To tackle the problem, we could try to find a \emph{dictionary} which allows to relate any on-shell expression $\Sigma(r)$ into a covariant combination of curvature invariants $\mathcal{R}$ whose evaluation on the background \eqref{eq:redsss} precisely yields\footnote{Note that such dictionary can never be one-to-one --- indeed, there are many different combinations of curvature invariants that produce the same expression when evaluated on \eqref{eq:redsss}.} $\R\vert_f=\Sigma(r)$. To this aim, we first notice that for $D \geq 4$ the Riemann tensor, when evaluated on \eqref{eq:redsss}, can be written as \cite{Deser:2005pc}
\begin{equation}
\left. R^{a b}{}_{c d} \right \vert_{f}= 2 \left[-A T_{[c}^{[a} T_{d]}^{b]}+2 B T_{[c}^{[a} \sigma_{d]}^{b]}+\psi \sigma_{[c}^{[a} \sigma_{d]}^{b]} \right]\,, 
\end{equation}
where $T^a_b=\delta^a_t \delta_b^t+\delta^a_r \delta_b^r $ and $\sigma^a_b=\delta^a_b-T^a_b$ are the projectors into the $(t,r)$ and angular directions, respectively.\footnote{Observe that $T_a^b T_b^c=T_a^c$, $\sigma_a^b \sigma_b^c=\sigma^a_c$, $T_a^b \sigma_b^c=0$, $T_a^a=2$ and $\sigma^a_a=D-2$.} In turn, these implies that:
\begin{align}
\left. R^a_b \right\vert_f&=((D-2) B-A) T^a_b+(2B+(D-3) \psi) \sigma^a_b,\\ 
 \left.R\right\vert_f&=4(D-2) B-2A+(D-3) (D-2) \psi.
\end{align}
Similarly, the Weyl tensor $W_{abcd}$ and the traceless Ricci tensor $Z_{ab}
$, which are the proper objects --- together with the Ricci scalar and the metric --- appearing in the Ricci decomposition of the Riemann tensor, read:
\begin{align}
\label{eq:weylsim}
\left.W^{ab}{}_{cd}\right\vert_f&=\Omega(r) \left[ \frac{(D-2)(D-3)}{2}T_{[c}^{[a} T_{d]}^{b]}-(D-3) T_{[c}^{[a} \sigma_{d]}^{b]}+ \sigma_{[c}^{[a} \sigma_{d]}^{b]} \right]\, , \\ 
\left.Z^a_b\right\vert_f&=\Theta(r)\left[ -\frac{D-2}{2} T^a_b+ \sigma^a_b \right ]\, ,
\label{eq:ricim}
\end{align}
where
\begin{align}
\label{eq:omsim}
\Omega(r)&=\frac{4-4f(r)+4 r f'(r)-2r^2 f''(r)}{(D-1)(D-2) r^2}\, , \\
\label{eq:thetasim} \Theta(r)&=\frac{2(D-3)(1-f(r))+(D-4) r f'(r)
+r^2 f''(r)}{D r^2}\,.
\end{align}
For completeness, we also present $\Rho(r)=R\vert_f$ in terms of $f(r)$ and its derivatives:
\begin{equation}
\Rho(r)=\frac{(D-2)((D-3)(1-f(r))-2r f'(r))-r^2 f''(r)}{r^2}\,.
\label{eq:scalsim}
\end{equation}
Now we continue the following proposition, part of which was already proven in \cite{Deser:2005pc,Bueno:2019ltp}.
\begin{proposition}
\label{prop:wzr}
Let $m$, $p$ and $q$ be positive integers such that $m^2+p^2>2$ and assume $D \geq 4$. Then, if $(W^m Z^p R^q)_i$ stands for a generic curvature invariant constructed from $m$ Weyl tensors, $p$ traceless Ricci tensors and $q$ Ricci scalars,
\begin{equation}
\left.(W^m Z^p R^q)_i\right\vert_f=c_i \Omega^m \Theta^p \Rho^q\,, 
\end{equation}
where $c_i$ is a certain numerical coefficient that depends on the specific invariant considered. 
\end{proposition}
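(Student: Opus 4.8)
\emph{Proof strategy.} The plan is to exploit the \emph{factorized} structure already displayed in Eqs.~\eqref{eq:weylsim}, \eqref{eq:ricim} and \eqref{eq:scalsim}: on the reduced ansatz \eqref{eq:redsss} each of the three elementary building blocks of the Ricci decomposition degenerates into a position-dependent scalar multiplying a fixed tensor whose components are pure numbers depending only on $D$. Concretely, introducing the $r$-independent tensors
\begin{equation}
\hat{W}^{ab}{}_{cd}=\frac{(D-2)(D-3)}{2}T_{[c}^{[a} T_{d]}^{b]}-(D-3) T_{[c}^{[a} \sigma_{d]}^{b]}+ \sigma_{[c}^{[a} \sigma_{d]}^{b]}\,,\qquad \hat{Z}^a_b=-\frac{D-2}{2} T^a_b+ \sigma^a_b\,,
\end{equation}
Eqs.~\eqref{eq:weylsim}--\eqref{eq:ricim} read $W^{ab}{}_{cd}\vert_f=\Omega(r)\,\hat{W}^{ab}{}_{cd}$ and $Z^a_b\vert_f=\Theta(r)\,\hat{Z}^a_b$, while $R\vert_f=\Rho(r)$ by \eqref{eq:scalsim}. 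The decisive observation is that all dependence on $r$ and on the profile $f$ has been squeezed into the \emph{scalars} $\Omega,\Theta,\Rho$, whereas $\hat{W}$ and $\hat{Z}$ are fixed once $D$ is fixed.

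First I would make precise the notion of a generic invariant. Any $(W^m Z^p R^q)_i$ is, by definition, a complete contraction — hence a scalar — of a tensor product of $m$ Weyl tensors, $p$ traceless Ricci tensors and $q$ Ricci scalars, the label $i$ recording the particular pattern in which indices are paired (using only the metric). Peeling off the $q$ Ricci scalars as an overall factor, we may write $(W^m Z^p R^q)_i=R^q\,\mathcal{C}_i\!\big(W^{\otimes m}\otimes Z^{\otimes p}\big)$, where $\mathcal{C}_i$ denotes that fixed contraction pattern. Evaluating on \eqref{eq:redsss} and using that $\Omega,\Theta,\Rho$ are scalar functions that commute through every index contraction and therefore pull out of $\mathcal{C}_i$ unchanged, I obtain
\begin{equation}
\left.(W^m Z^p R^q)_i\right\vert_f=\Omega^m\,\Theta^p\,\Rho^q\;\mathcal{C}_i\!\big(\hat{W}^{\otimes m}\otimes \hat{Z}^{\otimes p}\big)\,.
\end{equation}
The remaining object is the very same contraction pattern applied to the purely numerical tensors $\hat{W},\hat{Z}$, and is therefore a number. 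Setting $c_i\equiv \mathcal{C}_i(\hat{W}^{\otimes m}\otimes \hat{Z}^{\otimes p})$ — which by construction depends only on the pattern $i$ and on $D$ through the coefficients in $\hat{W},\hat{Z}$ and the traces $T^a_a=2$, $\sigma^a_a=D-2$ — establishes the claim, the actual evaluation of $c_i$ reducing to elementary use of the projector algebra $T^2=T$, $\sigma^2=\sigma$, $T\sigma=0$.

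Finally I would dispose of the hypothesis $m^2+p^2>2$. Its sole role is to guarantee that the family labelled by $i$ is nonempty, i.e. that a genuine nonzero full contraction of Weyl and traceless-Ricci tensors actually exists: a lone Weyl, a lone traceless Ricci, or exactly one of each admit no nonvanishing scalar contraction, because every trace of $W_{abcd}$ and of $Z_{ab}$ vanishes identically, and these are precisely the degenerate combinations $(m,p)\in\{(1,0),(0,1),(1,1)\}$ excluded by the bound. For all admissible $(m,p)$ the argument above applies verbatim. The main — indeed only — obstacle is thus organizational rather than conceptual: one must set up notation flexible enough to cover an \emph{arbitrary} contraction pattern, including those freely mixing Weyl and traceless-Ricci indices, so that the factorization of the scalar prefactors $\Omega^m\Theta^p\Rho^q$ is manifestly legitimate in full generality and the residual numerical contraction $c_i$ is unambiguously defined.
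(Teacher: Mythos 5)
Your proposal is correct and follows essentially the same route as the paper: the paper's (very terse) proof likewise observes that on the ansatz \eqref{eq:redsss} the Weyl and traceless Ricci tensors factorize into the scalars $\Omega,\Theta$ times fixed projector-built tensors, so that any full contraction pulls out $\Omega^m\Theta^p\Rho^q$ and leaves a purely numerical projector contraction, which is your $c_i$. Your write-up simply makes explicit the details (the hatted tensors, the contraction-pattern notation, and the role of the hypothesis $m^2+p^2>2$) that the paper leaves to the reader, the last point matching the paper's remark following the proposition.
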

\begin{proof}
Follows directly by observation of Equations \eqref{eq:weylsim}, \eqref{eq:ricim}, \eqref{eq:omsim} and \eqref{eq:thetasim}. Indeed, the contraction of the different projectors appearing in these equations will just give numerical contributions, encoded in $c_i$.
\end{proof}
\noindent
Note that if $m=p=0$ in the previous proposition, the statement is trivial, while the case $0<m^2+p^2 \leq 2$ is not considered since there are no curvature invariants built out with just one Weyl tensor, just one traceless Ricci tensor or with one Weyl tensor and one traceless Ricci tensor.

Let us now define the following densities for\footnote{The case $D=4$ is treated separately in Section \ref{sec:gqtgs4}.} $D\geq 5$:
\begin{align}
\label{eq:d1}
\sW_2 &\equiv\frac{4}{(D-2)^2(D-1)(D-3)} W_{a b c d} W^{a b c d}\,, \\
\label{eq:d2}
\sZ_2 & \equiv\frac{2}{D(D-2)} Z^a_b Z_a^b\, ,\\\
\label{eq:d3}
\sW_3 &\equiv \frac{8}{(D-3)(D-2)(2(2-(D-3)^2)+(D-2)^2(D-3)^2)} W\indices{_a_b^c^d}W\indices{_c_d^e^f}W\indices{_e_f^a^b}\,,\\
\label{eq:d4}
\sY_3 & \equiv \frac{8}{D^2(D-2)(D-3)} Z^a_b Z^c_d W\indices{_a_c^b^d}\, , \\
\label{eq:d5}
\sX_3 & \equiv -\frac{8}{(D-1)^2(D-2)(D-3)(D-4)} Z^a_b W_{a c d e}W^{bcde}\,,\\
\label{eq:d6}
\sZ_3 & \equiv -\frac{4}{D(D-2)(D-4)} Z^a_b Z^b_cZ_a^c\, , \\
\label{eq:d7}
\sY_4 & \equiv - \frac{16}{D^2(D-2)(D-3)(D-4)}  Z^{a}_b Z_{a c} Z_{d e} W^{bdce} \, ,\\
\label{eq:d8}
\sX_4 & \equiv -\frac{32}{D(D-1)^2(D-2)(D-3)^2(D-4)} Z^{ab} W_{acbd} W^{c efg} W^d{}_{efg}\, .
\end{align}
It turns out that:
\begin{equation}
\begin{split}
\left. \sW_2 \right \vert_f &= \Omega^2\, , \quad \left. \sZ_2 \right \vert_f= \Theta^2\, , \quad \left. \sW_3 \right \vert_f= \Omega^3\, , \quad \left. \sY_3 \right \vert_f= \Theta^2 \Omega\, ,  \\    \sX_3  \vert_f &=  \Omega^2\Theta\, , \quad \left. \sZ_3 \right \vert_f = \Theta^3\, , \quad  \left. \sY_4 \right \vert_f= \Theta^3 \Omega\,, \quad \left. \sX_4 \right \vert_f= \Omega^3 \Theta\,.
\end{split}
\end{equation}
\begin{proposition}
\label{teo:dsss}
Let $\R_{(n)}$ be any curvature density of arbitrary order $n$. Then there exist certain coefficients $a_i$ and non-negative integers $b_{m}^{(i)}$, with $m \in \{1,...,9\}$ and $i \in \{1,2,...,k_n\}$ for certain positive integer $k_n$, such that
\begin{equation}\label{eq:dsss}
\left. \R_{(n)} \right \vert_f= \sum_{i=1}^{k_n} a_i \left. \left ( R^{b_{1}^{(i)}} \sW_2^{b_{2}^{(i)}} \sZ_2^{b_{3}^{(i)}} \sW_3^{b_{4}^{(i)}}\sY_3^{b_{5}^{(i)}} \sX_3^{b_{6}^{(i)}} \sZ_3^{b_{7}^{(i)}} \sY_4^{b_{8}^{(i)}} \sX_4^{b_{9}^{(i)}} \right)\right \vert_f\,.
\end{equation}
\end{proposition}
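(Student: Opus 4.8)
The plan is to reduce the statement to a purely combinatorial question about which monomials in $\Omega$ and $\Theta$ arise, and then settle that question by an elementary semigroup generation argument. First I would feed the Ricci/Weyl decomposition of the Riemann tensor,
\begin{equation}
R_{abcd}=W_{abcd}+\frac{2}{D-2}\left(g_{a[c}Z_{d]b}-g_{b[c}Z_{d]a}\right)+\frac{2}{D(D-1)}R\,g_{a[c}g_{d]b}\,,
\end{equation}
into the generic order-$n$ density $\R_{(n)}$. Expanding the product of $n$ Riemann tensors and performing all the resulting metric contractions, $\R_{(n)}$ becomes a finite linear combination of scalar invariants, each of the form $(W^m Z^p R^q)_i$ with $m+p+q=n$, since every one of the $n$ curvature factors is converted into exactly one Weyl tensor, one traceless Ricci tensor, or one Ricci scalar. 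Proposition~\ref{prop:wzr} then gives $\left.(W^m Z^p R^q)_i\right\vert_f=c_i\,\Omega^m\Theta^p\Rho^q$, so that $\left.\R_{(n)}\right\vert_f$ is a finite linear combination of monomials $\Omega^m\Theta^p\Rho^q$. Since $\left.R\right\vert_f=\Rho$ and the eight densities $\sW_2,\dots,\sX_4$ evaluate on-shell to $\Omega^2,\Theta^2,\Omega^3,\Omega\Theta^2,\Omega^2\Theta,\Theta^3,\Omega\Theta^3,\Omega^3\Theta$ respectively, the whole claim reduces to showing that each such monomial can be written as $\left.(R^{b_1}\sW_2^{b_2}\cdots\sX_4^{b_9})\right\vert_f$; the factor $R^{b_1}$ reproduces $\Rho^q$ with $b_1=q$, so all the content lies in realizing $\Omega^m\Theta^p$, and the numerical factors $c_i$ are absorbed into the $a_i$.

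Next I would treat the generation problem for the exponent pair $(m,p)$, identifying each on-shell value with the lattice vector $(\#\Omega,\#\Theta)$. The available generators are then $(2,0),(0,2),(3,0),(1,2),(2,1),(0,3),(1,3),(3,1)$, and I would prove that the additive semigroup they generate is exactly $\{(m,p)\in\mathbb{Z}_{\geq 0}^2 : m+p\geq 2\}\setminus\{(1,1)\}$. This follows from a short case split: $(2,0)$ and $(3,0)$ yield every $\Omega^m$ with $m\geq 2$, and symmetrically $(0,2),(0,3)$ every $\Theta^p$ with $p\geq 2$, so the interior block $m,p\geq 2$ is covered by $(m,0)+(0,p)$; the boundary rows $(1,p)$ with $p\geq 2$ are reached using $(1,2)$ and $(1,3)$ together with powers of $(0,2)$, and the columns $(m,1)$ with $m\geq 2$ using $(2,1)$ and $(3,1)$ together with powers of $(2,0)$. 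The only lattice point with $m+p\geq 2$ that is not generated is $(1,1)$, since the sole generators with $m+p=2$ are $(2,0)$ and $(0,2)$.

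The step I expect to require the most care — the true crux of the argument — is to verify that the three exceptional exponent pairs $(1,0)$, $(0,1)$ and $(1,1)$ never occur in the expansion of $\left.\R_{(n)}\right\vert_f$, so that the semigroup above covers precisely the set of monomials that actually appear. This is exactly the geometric counterpart of the hypothesis $m^2+p^2>2$ in Proposition~\ref{prop:wzr}: there is no scalar invariant built from a single Weyl tensor, none from a single traceless Ricci tensor, and none from one Weyl and one traceless Ricci tensor, and each of these statements survives multiplication by an arbitrary power $R^q$ because $R^q$ carries no free indices. Hence the only monomials arising are $\Rho^n$, realized trivially by $R^n$, and $\Omega^m\Theta^p\Rho^q$ with $(m,p)$ in the semigroup, each of which has just been shown to be realizable; collecting the coefficients completes the proof. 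As a consistency check one observes that the construction also reproduces the curvature order, since each of $\sW_2,\dots,\sX_4$ has order equal to its monomial degree $m+p$, so that the order of every building-block product equals $q+(m+p)=n$.
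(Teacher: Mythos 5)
Your proof is correct and takes essentially the same route as the paper's: replace Riemann and Ricci tensors via the Ricci decomposition, invoke Proposition \ref{prop:wzr} to reduce everything to monomials $\Omega^m\Theta^p\Rho^q$, and realize those monomials by products of $R$ and the eight densities \eqref{eq:d1}--\eqref{eq:d8}. The only difference is one of detail: the paper simply asserts the final realizability step, whereas you prove it explicitly via the additive-semigroup generation argument on the exponent lattice (correctly identifying $(1,0)$, $(0,1)$ and $(1,1)$ as precisely the excluded pairs), which is a tightening of the same argument rather than a different approach.
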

\begin{proof}
Given any $\R_{(n)}$, let us replace all its Riemann and Ricci tensors by Weyl tensors, traceless Ricci tensors and Ricci scalars. This way, $\R_{(n)}$ is written as a linear combination of densities with the structure $(W^m Z^p R^q)$. By Proposition \ref{prop:wzr}, it turns out that all densities of the form $(W^m Z^p R^q)$, when evaluated on the reduced SSS ansatz \eqref{eq:redsss}, are proportional to $\Omega^m \Theta^p \Rho^q$. Since it always holds that either $m=p=0$ or $m^2+p^2>2$ (see explanation below Proposition \ref{prop:wzr}), it is possible to express $\R_{(n)} \vert_f$  in terms of the densities in \Cref{eq:d1,eq:d2,eq:d3,eq:d4,eq:d5,eq:d6,eq:d7,eq:d8} as in \eqref{eq:dsss} (for appropriate choices of $a_i$ and $b_{m}^{(i)}$) and we conclude. 
\end{proof}

Proposition \ref{teo:dsss} is the key result that will allow us to translate on-shell expressions into proper covariant or off-shell densities. Indeed, if we define:
\begin{align}
\I^{(1)}_l&=\sW_2^{\frac{l-\pi_l}{2}} \left ( (1-\pi_l) \sW_2 +\pi_l \sW_3\right)\, , \\
\I^{(2)}_l&=\sZ_2^{\frac{l-\pi_l}{2}} \left (  (1-\pi_l) \sZ_2 +  \pi_l \sZ_3 \right)\,, \\
\I^{(3)}_l &=\sW_2^{\frac{l-\pi_l}{2}} \left ( (1-\pi_l) \sX_3 +  \pi_l \sX_4\right)\, , \\
\I^{(4)}_l&=\sZ_2^{\frac{l-\pi_l}{2}} \left (  (1-\pi_l) \sY_3 +  \pi_l \sY_4 \right)\,,
\end{align}
where
\begin{equation}
\pi_l= \left \lbrace \begin{matrix}
0\,  \quad l \,\, \mathrm{even}\\ 
1\,  \quad l \,\, \mathrm{odd}
\end{matrix} \right.\,,
\end{equation}
we can construct the following \emph{dictionary} relating\footnote{It is not a bijective relation, of course.} on-shell quantities to off-shell covariant densities,\footnote{Note that there are different choices of $a_i$ and $b_{m}^{(i)}$ in Proposition \ref{teo:dsss} that produce the same expression $\left. \R_{(n)} \right \vert_f$. Therefore, the dictionary we present here is just a convenient instance of the many different dictionaries that one may construct.} with $l \geq 0$: 
\begin{align}
\nonumber
\textbf{On-shell} & \hspace{1.3cm} \textbf{Off-shell} \\ \label{eq:dic1}
\Omega(r)^{l+2} & \xrightarrow{\hspace*{1.3cm}} \I_l^{(1)}\\  \label{eq:dic2}
\Theta(r)^{l+2} &\xrightarrow{\hspace*{1.3cm}}   \I_l^{(2)}\\ \label{eq:dic3}
\Theta(r) \Omega(r)^{l+2} &\xrightarrow{\hspace*{1.3cm}}  \I_l^{(3)}\\ \label{eq:dic4}
\Omega(r) \Theta(r)^{l+2} &\xrightarrow{\hspace*{1.3cm}}  \I_l^{(4)} \\
\label{eq:dic5}
\Rho(r) &\xrightarrow{\hspace*{1.3cm}}  R\,.
\end{align}

\noindent
Equipped with this dictionary, the procedure to follow to obtain a covariant fixed expression for a representative of each equivalence class of GQTGs at all curvature orders is:
\begin{enumerate}
\item Start with the on-shell expressions \eqref{eq:zos} and \eqref{eq:sos}. 
\item Massage the corresponding on-shell forms to write everything in terms of $\Rho(r)$, $\Omega(r)$ and $\Theta(r)$.
\item Apply the dictionary (\Cref{eq:dic1,eq:dic2,eq:dic3,eq:dic4,eq:dic5}) to obtain off-shell expressions whose evaluation on \eqref{eq:redsss} precisely yields the initial on-shell quantity.
\end{enumerate}

\noindent The algorithm is clear, but there is a subtle issue to address. Indeed, observe that there are \emph{a priori} three types of terms that are not covered by the previous dictionary: $\Rho^q \Theta$, $\Rho^q \Omega$ and $\Rho^q \Theta \Omega$. Nevertheless, those terms can never come from a covariant off-shell expression, since no curvature invariants may be formed with just one Weyl tensor, just one traceless Ricci tensor or one Weyl tensor and one traceless Ricci tensor. Therefore, for a potential on-shell GQTG to arise from a true off-shell combination of densities, such terms cannot appear.


\section{All Generalized Quasitopological Gravities}\label{sec:all}


Once presented the method to convert on-shell expressions into off-shell densities, we are in position of finding a representative of each equivalence class of GQTGs. We will start finding a representative of the unique QTG that exists at each order in curvature (for $D \geq 5$), and then we will continue with the characterization of inequivalent proper GQTGs in $D \geq 5$, commenting also on the number of curvature invariants that are needed to write them.



\subsection{Quasitopological Gravities}

Let us follow the procedure described in Section \ref{sec:offon}. We start with Eq. \eqref{eq:zos}, which we write in terms of $A$, $B$ and $\psi$ as defined in \eqref{eq:defabc} \cite{Bueno:2019ycr}:
\begin{equation}
\label{eq:qtgos}
\Z_{(n)} \vert_f=-4n(n-1) B^2 \psi^{n-2}+n(2A-4(D-2n)B) \psi^{n-1}-(D-2n)(D-2n-1) \psi^n\,,
\end{equation}
where $n \geq 1$. The second step of the procedure requires to express $\Z_{(n)} \vert_f$ in terms of $\Rho$, $\Omega$ and $\Theta$. For that, we note that:
\begin{align}
\label{eq:abca1}
A &=-\frac{1}{D(D-1)} \Rho+ \Theta-\frac{(D-2)(D-3)}{4}\Omega\, , \\
\label{eq:abca2}
B &=\frac{1}{D(D-1)} \Rho- \frac{D-4}{2(D-2)} \Theta-\frac{D-3}{4}\Omega\, , \\\label{eq:abca3}
\psi &=\frac{1}{D(D-1)} \Rho+ \frac{2}{D-2}\Theta+\frac{1}{2}\Omega\, .
\end{align}
Substituting in \eqref{eq:qtgos} and massaging carefully the subsequent expression, a somewhat involved calculation and direct use of the dictionary (\Cref{eq:dic1,eq:dic2,eq:dic3,eq:dic4,eq:dic5}) produces the unique inequivalent covariant QTG\footnote{We normalize the coefficient of $R^n$ to one.} existing at each curvature order for $D\geq 5$.
\begin{theorem}
\label{theo:qtgs}
A representative of the unique equivalence class of QTGs existing at each curvature order $n \geq 3$ for $D \geq 5$ can be chosen to be\footnote{We remind the usual convention that whenever the upper limit of a summation is bigger than the lower limit, such summation is identically zero.}
\begin{align}
\Z_{(n)}&=R^n+ \sum_{l=0}^{n-2}  R^{n-l-2}\left (\gamma_{n,-2,l}\I^{(1)}_l+\gamma_{n,l,-2}\I^{(2)}_l\right)+\sum_{l=0}^{n-3}  R^{n-l-3}\left (\gamma_{n,-1,l} \I^{(3)}_l+\gamma_{n,l,-1}\I^{(4)}_l\right)\notag \\ &+ \sum_{l=0}^{n-4}\sum_{p=0}^{n-l-4} \gamma_{n,l,p} R^{n-l-p-4}\I^{(1)}_p \I^{(2)}_l\,, \quad n \geq 3\, ,\label{eq:QGnWZR}
\end{align}
where the constants $\gamma_{n,l,p}$ are only non-zero for $l,p \geq -2$ and $l+p +4 \leq n$, in which case
\begin{align}
\label{eq:gamma}
\gamma_{n,l,p}&=\frac{n!(D(D(l-2)+4)(l+1)+4(D-1)(D l+1)(p+2)+4(D-1)^2(p+2)^2)}{2^{2-l+p} (D^2-D)^{-p-l-3} (D-2)^{l+2} (l+2)!(p+2)!(n-l-p-4)! }\,,
\end{align}
\end{theorem}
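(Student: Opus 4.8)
The plan is to follow the three-step procedure stated after the dictionary and to read the theorem as a polynomial identity. Since the relations \eqref{eq:abca1}--\eqref{eq:abca3} express $A,B,\psi$ invertibly in terms of $\Rho,\Theta,\Omega$ (the change of basis is linear with nonvanishing determinant for $D\geq5$), the three functions $\Rho,\Omega,\Theta$ may be treated as algebraically independent indeterminates. First I would substitute \eqref{eq:abca1}--\eqref{eq:abca3} into \eqref{eq:qtgos} to write $\Z_{(n)}\vert_f$ as an explicit homogeneous polynomial of degree $n$ in $\Rho,\Omega,\Theta$; in parallel, I would evaluate the right-hand side of \eqref{eq:QGnWZR} on-shell using $\I^{(1)}_p\vert_f=\Omega^{p+2}$, $\I^{(2)}_l\vert_f=\Theta^{l+2}$, $\I^{(3)}_l\vert_f=\Omega^{l+2}\Theta$, $\I^{(4)}_l\vert_f=\Omega\Theta^{l+2}$ and $R\vert_f=\Rho$. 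The theorem is then equivalent to the statement that these two polynomials coincide, with the coefficient of $\Rho^n$ equal to one; matching must be checked monomial by monomial, the generic monomial $\Rho^{q}\Omega^{a}\Theta^{b}$ (with $q+a+b=n$) corresponding to the index pair $l=b-2$, $p=a-2$.

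To organize the coefficient extraction it is convenient to factor $\Z_{(n)}\vert_f=\psi^{\,n-2}\,\mathcal{Q}_n$, where $\mathcal{Q}_n=-4n(n-1)B^2+n\bigl(2A-4(D-2n)B\bigr)\psi-(D-2n)(D-2n-1)\psi^2$ is a fixed quadratic form in $\Rho,\Omega,\Theta$. The bulk of the $n$-dependence then originates from the multinomial expansion of $\psi^{\,n-2}$, whose factorials combine with the degree-two shifts supplied by $\mathcal{Q}_n$ to produce the denominator $(l+2)!\,(p+2)!\,(n-l-p-4)!$ of \eqref{eq:gamma}; noting in addition that $A,B,\psi$ all carry $\Rho$ with coefficient $\pm\lambda$, $\lambda=\tfrac{1}{D(D-1)}$, keeps the power of $\Rho$ easy to track. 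The coefficient of $\Rho^{q}\Omega^{a}\Theta^{b}$ is thus a convolution of this multinomial with the six coefficients of $\mathcal{Q}_n$, and the goal is to collapse the resulting finite sum into $\gamma_{n,b-2,a-2}$, whose numerator $D(D(l-2)+4)(l+1)+4(D-1)(Dl+1)(p+2)+4(D-1)^2(p+2)^2$ should emerge from $\mathcal{Q}_n$.

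A structurally important point, doubling as a consistency check, is the fate of the three forbidden monomials $\Rho^{q}\Omega$, $\Rho^{q}\Theta$ and $\Rho^{q}\Omega\Theta$, i.e. the index pairs $(l,p)\in\{(-2,-1),(-1,-2),(-1,-1)\}$. These cannot descend from any covariant density, so their coefficients in $\Z_{(n)}\vert_f$ must vanish identically, and the numerator of \eqref{eq:gamma} is arranged to do exactly this: at $(-1,-2)$ the vanishing factors $(l+1)$ and $(p+2)$ kill every surviving term, while at $(-2,-1)$ and $(-1,-1)$ the terms combine into $4(D-1)$ times a bracket that collapses to zero. Thus one and the same formula encodes both the genuine coefficients and the vanishing of the illegal ones, and the normalization $\gamma_{n,-2,-2}=1$ (the case $a=b=0$) fixes the $R^n$ coefficient. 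Once the polynomial identity is established, applying the dictionary \eqref{eq:dic1}--\eqref{eq:dic5} lifts $\Z_{(n)}\vert_f$ to the covariant density \eqref{eq:QGnWZR}; since \eqref{eq:qtgos} is the on-shell form of the unique QTG equivalence class at order $n$ for $D\geq5$ \cite{Bueno:2019ycr,Bueno:2022res}, this $\Z_{(n)}$ is a representative of it.

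The step I expect to be the main obstacle is precisely this algebraic collapse: tracking the factorials, the powers of $\lambda=\tfrac{1}{D(D-1)}$, $\tfrac{2}{D-2}$ and $\tfrac12$, and the prefactors $(D-2n)$, $(D-2n-1)$ through the six-term convolution until they assemble into the compact numerator above. Rather than resumming by brute force, I would likely verify the closed form \eqref{eq:gamma} indirectly --- for instance by checking that it reproduces low-$n$ cases explicitly and satisfies whatever recursion in $n$ is induced by the factor $\psi^{\,n-2}$ --- which should be a cleaner route to the stated expression than direct expansion.
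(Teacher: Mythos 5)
Your proposal follows essentially the same route as the paper: start from the on-shell form \eqref{eq:qtgos}, rewrite it in terms of $\Rho,\Omega,\Theta$ via \eqref{eq:abca1}--\eqref{eq:abca3}, match monomials $\Rho^q\Omega^a\Theta^b$ against the on-shell evaluation of \eqref{eq:QGnWZR} (the paper likewise summarizes this as ``a somewhat involved calculation''), and lift with the dictionary, with uniqueness of the equivalence class imported from the on-shell classification of \cite{Bueno:2022res}. Your added observations check out --- the numerator of \eqref{eq:gamma} indeed vanishes at the forbidden index pairs $(-2,-1)$, $(-1,-2)$, $(-1,-1)$ and gives $\gamma_{n,-2,-2}=1$, consistent with the paper's conventions --- so the proposal is correct and coincides in approach with the paper's proof.
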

\noindent
Let us present here the explicit form of $\mathcal{Z}_{(n)}$ from $n=1$ to $n=4$:
\begin{align}
\Z_{(1)}&=R\, , \\
\Z_{(2)}&=R^2+\frac{(D-1) D W_{a b c d} W^{a b c d}}{(D-3) (D-2)}-\frac{4 (D-1) D Z^a_b Z_a^b}{(D-2)^2}= \frac{D(D-1)}{(D-2)(D-3)}\mathcal{X}_4\,, \\
\mathcal{Z}_{(3)}&=R^3+\frac{2 (D-1)^2 D^2 (2 D-3) W\indices{_a_b^c^d}W\indices{_c_d^e^f}W\indices{_e_f^a^b}}{(D-3) (D-2) (D ((D-9)
   D+26)-22)}+\frac{24 (D-1)^2 D^2 Z^a_b Z^c_d W\indices{_a_c^b^d}}{(D-3) (D-2)^3}\notag\\
   &+\frac{16 (D-1)^2
   D^2 Z^a_b Z^b_cZ_a^c}{(D-2)^4}-\frac{24 (D-1)^2 D^2 Z^a_b W_{a c d e}W^{bcde}}{(D-4) (D-3)
   (D-2)^2}+\frac{3 (D-1) D R W_{a b c d} W^{a b c d}}{(D-3) (D-2)}\notag\\
   &-\frac{12 (D-1) D R
   Z^a_b Z_a^b}{(D-2)^2}\, ,\\
\mathcal{Z}_{(4)}&=R^4+\frac{3 (D-1)^2 D^3 (3 D-4) \left(W_{a b c d} W^{a b c d}\right)^2}{(D-3)^2 (D-2)^4}-\frac{384 (D-1)^3 D^3
   Z^{a}_b Z_{a c} Z_{d e} W^{bdce}}{(D-4) (D-3) (D-2)^4}\notag\\
   &+\frac{8 (D-1)^2 D^2 (2 D-3) R
   W\indices{_a_b^c^d}W\indices{_c_d^e^f}W\indices{_e_f^a^b}}{(D-3) (D-2) (D ((D-9) D+26)-22)}+\frac{96 (D-1)^2 D^2 R
   Z^a_b Z^c_d W\indices{_a_c^b^d}}{(D-3) (D-2)^3}\notag\\
   &+\frac{64 (D-1)^2 D^2 R Z^a_b Z^b_cZ_a^c}{(D-2)^4}-\frac{96 (D-1)^2
   D^2 R Z^a_b W_{a c d e}W^{bcde}}{(D-4) (D-3) (D-2)^2}\notag\\
   &+\frac{24 (D-1)^2 D^2 (D (7
   D-10)+4) W_{a b c d} W^{a b c d}Z^e_f Z_e^f}{(D-3) (D-2)^5}+\frac{192 (D-1)^3 D^2
   \left(Z^a_b Z_a^b\right)^2}{(D-2)^6}\notag\\
   &-\frac{192 (D-1)^3 D^2 Z^{ab} W_{acbd} W^{c efg} W^d{}_{efg}}{(D-4) \left(D^2-5
   D+6\right)^2}+\frac{6 (D-1) D R^2 W_{a b c d} W^{a b c d}}{(D-3) (D-2)}\notag\\
   &-\frac{24 (D-1) D R^2
   Z^a_b Z_a^b}{(D-2)^2}\,,
\end{align}
where $\mathcal{X}_4$ is the Gauss-Bonnet density. Observe that $\mathcal{Z}_{(3)}$ is equivalent (as a QTG) to the cubic theory presented in Eq. \eqref{eq:Zmyers}, since
\begin{equation}
\Z_{(3)}=\frac{4 (D-1)^2 D^2 (2 D-3)}{(D-3) (D-2) (D ((D-9) D+26)-22)}\left ( \mathcal{Z}^{\text{MR}}_{(3)}+\frac{1}{8}\X_6 \right) \, ,
\end{equation}
where $\X_6$ is the cubic Lovelock density, given by
\begin{align}
\mathcal{X}_6&=-8\tensor{R}{_{a}^{c}_{b}^{d}}\tensor{R}{_{c}^{e}_{d}^{f}}\tensor{R}{_{e}^{a}_{f}^{b}}+4\tensor{R}{_{ab}^{cd}}\tensor{R}{_{cd}^{ef}}\tensor{R}{_{ef}^{ab}} -24\tensor{R}{_{abcd}}\tensor{R}{^{a bc}_{e}}R^{d e} \notag \\
&+3\tensor{R}{_{abcd}}\tensor{R}{^{abcd}}R+24\tensor{R}{_{abcd}}\tensor{R}{^{ac}}\tensor{R}{^{bd}}
+16 R^{b}_{a} R_{b}^{c} R_{c}^{a}
-12R_{ab}R^{ab} R+R^3\, .
\end{align}
For the benefit of the reader, we also show in the appendix the explicit form of the QTG densities \eqref{eq:QGnWZR} for $n=5$ and $n=6$.

Theorem \ref{theo:qtgs} provides a representative of the unique equivalence class of QTGs that exists at each curvature order in $D \geq 5$. Remarkably, Eq. \eqref{eq:QGnWZR} turns out to be dramatically simpler than the explicit QTGs at all orders and dimensions given in \cite{Bueno:2019ycr}, which we have verified to differ from our QTGs by trivial GQTGs. Additionally, we have explicitly checked that the Lagrangian \eqref{eq:QGnWZR} satisfies the recurrence relation \eqref{eq:zrecu} up to trivial densities --- that is, the on-shell evaluation of \eqref{eq:QGnWZR} satisfies \eqref{eq:zrecu} exactly.





The QTGs \eqref{eq:QGnWZR} are defined for $D \geq 5$. In the case $D=4$, we explained before that there seems to be no QTGs (apart from Einstein gravity), which will be rigorously proven\footnote{We remind that, for $D=3$,  the only densities satisfying the GQTG condition \eqref{eq:gqtgdef} are trivial ones (except for the Einstein-Hilbert term, of course) \cite{Bueno:2022lhf}.} in Section \ref{sec:gqtgs4}. 




\subsection{Generalized Quasitopological Gravities in $D \geq 5$}

Again, we have to apply the procedure explained in Section \ref{sec:offon}. We start by writing the on-shell expressions for the $n-2$ inequivalent GQTG representatives in \eqref{eq:sos} in terms of the variables $A,B$ and $\psi$ defined in \eqref{eq:defabc}:
\begin{align}\notag
\mathcal{S}_{(n,j)}\vert_f&=\frac{(D+Dj-4n)(2n-D+1)}{2n}\psi^n-(D+Dj-4n)\psi^{n-1}B+j(j+1)A\psi^{n-j}B^{j-1}\label{eq:gqgos}\\
&+(j+1)(2n-D-j+1)\psi^{n-j}B^j-(j^2-1)A\psi^{n-j-1}B^j\\\notag
&+(D(j-1)+j(1+3j-4n))\psi^{n-j-1}B^{j+1}-2(j-1)(1+j-n)\psi^{n-j-2}B^{j+2}\,,
\end{align}
with $n \geq 3$ and $j=2,\dots,n-1$. The following step requires to express $A, B$ and $\psi$ in terms of $\Rho$, $\Omega$ and $\Theta$, which may be done by use of Eqs. \eqref{eq:abca1}, \eqref{eq:abca2} and \eqref{eq:abca3}. Then, after some involved computations, one can write the subsequent expression for \eqref{eq:gqgos} in a form which is readily adapted for application of our on-shell to off-shell dictionary given by \Cref{eq:dic1,eq:dic2,eq:dic3,eq:dic4,eq:dic5}. This way, one gets the covariant form for the $n-2$ GQTGs existing\footnote{We normalize the coefficient of the term $R^n$ to one.} at each curvature order $n$ in $D \geq 5$.
\begin{theorem}
\label{theo:gqtgs}
Representatives of each of the $n-2$ equivalence classes of proper GQTGs existing at each curvature order $n\geq 3$ with $D \geq 5$ can be taken to be
\begin{align}
\label{eq:gGQTGoff}
&\hspace{-0.3cm }\mathcal{S}_{(n,j)} =  R^n+ \sum_{l=0}^{n-2}  R^{n-l-2}\left (\sigma_{n,j,-2,l} \I^{(1)}_l+\sigma_{n,j,l,-2}\I^{(2)}_l\right)\\&\hspace{-0.2cm }+\sum_{l=0}^{n-3}  R^{n-l-3}\left (\sigma_{n,j,-1,l} \I^{(3)}_l+\sigma_{n,j,l,-1}\I^{(4)}_l\right)+ \sum_{l=0}^{n-4}\sum_{p=0}^{n-l-4} \sigma_{n,j,l,p} R^{n-l-p-4}\I^{(1)}_p \I^{(2)}_l\,, \quad n \geq 3\,, \nonumber
\end{align}
where $j=2,\dots,n-1$ and where the constants $\sigma_{n,j,l,p}$ with $l,p \geq -2$ are given by
\begin{align}
\sigma_{n,j,l,p}&=\frac{\rho_{n,j,l+2,p+2}}{D^2-D}+\frac{(D-3)(D-2)}{4} \rho_{n,j,l+2,p+1}-\rho_{n,j,l+1,p+2}-\frac{\nu_{n,j,l+2,p+2}}{j+1}\,, \\ 
\rho_{n,j,l,p}&=2n(D^2-D)^{n-1} (j\mu_{n-j,j-1,l,p}-(j-1)\mu_{n-j-1,j,l,p}) \\ 
\nu_{n,j,l,p}&=(D^2-D)^{n-1}\Big [ (D+D j-4n)(2n+1-D)\mu_{n,0,l,p}-2n(D+D j-4n)\mu_{n-1,1,l,p} \nonumber\\
&-2n(j+1)(D-1+j-2n)\mu_{n-j,j,l,p}+2n(D(j-1)+j(1+3j-4n))\mu_{n-j-1,j+1,l,p}\nonumber\\
& -4n(j-1)(j+1-n)\mu_{n-j-2,j+2,l,p} \Big]\,,\\
\mu_{n,j,l,p}&=\sum_{k=0}^l \eta_{n,j,l,p,k} P_p^{(j+k-l-p,l-n-j-1)}\left ( \frac{D+1}{D-3} \right) \,,
\end{align}
where $P_n^{(\alpha,\beta)}(x)$ stands for the Jacobi polynomial and where the coefficients $\eta_{n,j,l,p,k}$ for $j\geq l-k$ are given by
\begin{equation}
\eta_{n,j,l,p,k}=\left (\frac{3-D}{4} \right)^{p} \left (\frac{2}{D-2} \right)^l \left (\frac{4-D}{4} \right)^{l-k} (D^2-D)^{l+p-n-j} \begin{pmatrix}
n \\ k
\end{pmatrix}\begin{pmatrix}
j \\ l-k
\end{pmatrix}\,,
\end{equation}
and are zero otherwise.
\end{theorem}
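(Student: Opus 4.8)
The plan is to follow exactly the three-step recipe of Section \ref{sec:offon}, but run in the verification direction: I would \emph{define} $\mathcal{S}_{(n,j)}$ by the covariant formula \eqref{eq:gGQTGoff} and then show that its on-shell evaluation reproduces the known GQTG reduction \eqref{eq:gqgos} (equivalently \eqref{eq:sos}). Since the GQTG property and the equivalence class of a density depend, by definition, only on its restriction to the reduced ansatz \eqref{eq:redsss}, matching the on-shell forms suffices to conclude that each $\mathcal{S}_{(n,j)}$ is a proper GQTG in the $j$-th equivalence class. The on-shell side is immediate from Proposition \ref{prop:wzr}: each building block $\I^{(a)}_l$ reduces, by construction, to a single monomial in $\Omega$ and $\Theta$ (namely $\Omega^{l+2}$, $\Theta^{l+2}$, $\Theta\Omega^{l+2}$, $\Omega\Theta^{l+2}$ for $a=1,2,3,4$) while $R$ reduces to $\Rho$, so the right-hand side of \eqref{eq:gGQTGoff} collapses on-shell to a polynomial in $\Rho,\Omega,\Theta$ and the task becomes checking that this polynomial equals \eqref{eq:gqgos}.

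First I would substitute \eqref{eq:abca1}--\eqref{eq:abca3} into \eqref{eq:gqgos}, turning the on-shell expression into a polynomial in $\Rho,\Omega,\Theta$. The natural building blocks are the monomials $\tau_{(n,k)}=\psi^{n-k}B^k$ together with the terms carrying one extra factor of $A$; expanding each $\psi$, $B$ and $A$ via its linear decomposition and applying the multinomial theorem, the coefficient of a fixed monomial $\Rho^{q}\Omega^{m}\Theta^{p}$ becomes a finite sum of products of binomial coefficients weighted by powers of the entries of \eqref{eq:abca1}--\eqref{eq:abca3}. Organizing this bookkeeping is precisely what produces the auxiliary quantities in the statement: $\mu_{n,j,l,p}$ collects the $\Theta$/$\Omega$-content of a single power $\psi^{n-j}B^{j}$; $\nu_{n,j,l,p}$ assembles the full content of \eqref{eq:gqgos} by summing the five $\tau$-type contributions with their prefactors; and $\rho_{n,j,l,p}$ packages the two terms in \eqref{eq:gqgos} that carry an explicit factor $A$, so that the three ways of decomposing that $A$ through \eqref{eq:abca1} reappear as the three $\rho$-terms in $\sigma_{n,j,l,p}$, weighted by $\tfrac{1}{D^2-D}$, $\tfrac{(D-3)(D-2)}{4}$ and $-1$ at correspondingly shifted indices. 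The coefficients $\sigma_{n,j,l,p}$ are finally read off by matching against the dictionary \Cref{eq:dic1,eq:dic2,eq:dic3,eq:dic4,eq:dic5}, the shift $l\mapsto l+2$, $p\mapsto p+2$ reflecting that $\I^{(1)}_l$ and $\I^{(2)}_l$ encode $\Omega^{l+2}$ and $\Theta^{l+2}$.

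The technical heart is the identification of the combinatorial sums as Jacobi polynomials. When one fixes the $\Rho$-power and sums over the ways of distributing the remaining $\Theta$ and $\Omega$ factors between the $\psi$- and $B$-type factors, the resulting alternating binomial sum is of the Jacobi form $P_p^{(\alpha,\beta)}(x)$; the argument evaluates to $x=\tfrac{D+1}{D-3}$ because it is fixed by the ratios of the $\Theta$- to $\Omega$-coefficients in the decompositions \eqref{eq:abca2}--\eqref{eq:abca3}, so that $\tfrac{x-1}{2}=\tfrac{2}{D-3}$ and $\tfrac{x+1}{2}=\tfrac{D-1}{D-3}$ are exactly the combinations that arise, while the multinomial prefactors collapse into the closed form $\eta_{n,j,l,p,k}$. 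I would confirm the identification by comparing the defining hypergeometric expansion of $P_p^{(\alpha,\beta)}$ against the index ranges forced by $j\geq l-k$ and $0\leq k\leq l$.

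Finally, two consistency points must be settled. One must check that the three monomial types not covered by the dictionary --- $\Rho^q\Omega$, $\Rho^q\Theta$ and $\Rho^q\Theta\Omega$ --- never appear after the substitution; this holds a priori because \eqref{eq:gqgos} is the reduction of a genuine covariant GQTG whose existence is established in \cite{Bueno:2022res}, and by Proposition \ref{teo:dsss} no covariant density can produce those monomials, so they must cancel identically, which also furnishes a useful internal cross-check of the algebra. One must also confirm that the resulting $n-2$ densities are inequivalent proper GQTGs: inequivalence follows from the linear independence of the on-shell forms \eqref{eq:sos} in the corrected basis flagged in the footnote to \eqref{eq:sos}, while being \emph{proper} (rather than quasitopological) is built into the construction, the associated $\mathcal{E}^{\text{SSS}}$ being genuinely second order. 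The main obstacle I anticipate is neither conceptual nor the dictionary step but the sheer combinatorial bookkeeping of the multinomial expansion and the clean extraction of the Jacobi closed form for $\mu_{n,j,l,p}$: carrying the five $\tau$-contributions and the $A$-corrected terms through simultaneously, while keeping the $\Rho$-powers aligned with the powers of $R$ on the off-shell side, is where errors are most likely to enter.
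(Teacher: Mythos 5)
Your proposal is correct and follows essentially the same route as the paper: both rest on the on-shell classification of \cite{Bueno:2022res}, the substitution of \eqref{eq:abca1}--\eqref{eq:abca3} into \eqref{eq:gqgos} to obtain a polynomial in $\Rho$, $\Omega$, $\Theta$, and the dictionary \Cref{eq:dic1,eq:dic2,eq:dic3,eq:dic4,eq:dic5} to pass between that polynomial and the covariant density \eqref{eq:gGQTGoff}. The paper runs the computation in the uplift direction while you frame it as a verification of a candidate formula, but the algebraic content --- the multinomial bookkeeping behind $\mu$, $\nu$, $\rho$, $\sigma$, the Jacobi-polynomial resummation, and the required cancellation of the $\Rho^q\Omega$, $\Rho^q\Theta$, $\Rho^q\Theta\Omega$ monomials --- is identical.
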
 
Despite the quite intricate expression for the coefficients $\sigma_{n,j,l,p}$, we note they are just numbers for each dimension $D \geq 5$. Therefore, we have explicitly found, for the first time, the covariant form for a representative of all equivalence classes of GQTGs in $D \geq 5$. We have checked that a linear combination of such inequivalent GQTGs happens to be identical to the explicit GQTG found at each curvature order in \cite{Bueno:2019ycr}, up to the addition of trivial GQTGs. 

For $n=3,4$, the formula \eqref{eq:gGQTGoff} produces the following GQTGs:
\begin{align}
\mathcal{S}_{(3,2)}&=R^3-\frac{(D-1)^2 D^2 (D ((D-3) D-5)+11) W\indices{_a_b^c^d}W\indices{_c_d^e^f}W\indices{_e_f^a^b}}{2 (D-3) (D-2) (D ((D-9)
   D+26)-22)}-\frac{12 (D-1) D R Z^a_b Z_a^b}{(D-2)^2}\notag\\
   &-\frac{6 (D-5) (D-1)^2 D^2 Z^a_b Z^c_d W\indices{_a_c^b^d}}{(D-3) (D-2)^3}-\frac{(D-4) (D-1)^2 D^2
   (D+4) Z^a_b Z^b_cZ_a^c}{(D-2)^4}\notag\\
   &+\frac{3 (D-1)^2 D^2 (D+4) Z^a_b W_{a c d e}W^{bcde}}{2 (D-3) (D-2)^2}+\frac{3 (D-1)
   D R W_{a b c d} W^{a b c d}}{(D-3) (D-2)}\, .
\end{align}
\begin{align}
\mathcal{S}_{(4,2)}&=R^4+\frac{8(D-1)^2 D^3 (D-3) (D-1) (D (D+4)-52) Z^{a}_b Z_{a c} Z_{d e} W^{bdce}}{(D-4) (D-3)^2 (D-2)^4}\notag\\
   &-\frac{(D-1)^2 D^3(D-4) (D ((D-3)
   D-6)+11) \left(W_{a b c d} W^{a b c d}\right)^2}{(D-4) (D-3)^2 (D-2)^4}\notag\\
   &-\frac{2 (D-1)^2 D^2 (D ((D-3)
   D-21)+35) R W\indices{_a_b^c^d}W\indices{_c_d^e^f}W\indices{_e_f^a^b}}{3 (D-3) (D-2) (D ((D-9) D+26)-22)}\notag\\
   &-\frac{8 (D-13) (D-1)^2 D^2 R
   Z^a_b Z^c_d W\indices{_a_c^b^d}}{(D-3) (D-2)^3}-\frac{4 (D-1)^2 D^2 \left(D^2-48\right) R Z^a_b Z^b_cZ_a^c}{3 (D-2)^4}\notag\\
   &+\frac{2 (D-1)^2
   D^2 \left(D^2-48\right) R Z^a_b W_{a c d e}W^{bcde}}{(D-4) (D-3) (D-2)^2}+\frac{4 (D-1)^3 D^2 \left((D-8)
   D^2+144\right) \left(Z^a_b Z_a^b\right)^2}{3 (D-2)^6}\notag\\
   &-\frac{4 (D-1)^2 D^2 (D (D
   (D (2 D-3)-41)+60)-24) W_{a b c d} W^{a b c d}Z^e_f Z_e^f}{(D-3) (D-2)^5}\notag\\
   &+\frac{4 (D-1)^3 D^2 (D (11 D-12)-140) Z^{ab} W_{acbd} W^{c efg} W^d{}_{efg}}{3 (D-4)
   \left(D-2\right)^2 \left(D-3\right)^2}+\frac{6 (D-1) D R^2 W_{a b c d} W^{a b c d}}{(D-3) (D-2)}\notag\\
   &-\frac{24 (D-1) D R^2
   Z^a_b Z_a^b}{(D-2)^2}\, ,
\end{align}
\begin{align}
\mathcal{S}_{(4,3)}&=R^4+\frac{3 (D-1)^2 D^3 (D (D ((D-8) D+18)-4)-11) \left(W_{a b c d} W^{a b c d}\right)^2}{4 (D-3)^2 (D-2)^4}\notag\\
   &+\frac{120
   (D-1)^3 D^3 Z^{a}_b Z_{a c} Z_{d e} W^{bdce}}{(D-3) (D-2)^4}-\frac{24 (D-5) (D-1)^2 D^2 R Z^a_b Z^c_d W\indices{_a_c^b^d}}{(D-3)
   (D-2)^3}\notag\\
   &-\frac{2 (D-1)^2 D^2 (D ((D-3) D-5)+11) R
   W\indices{_a_b^c^d}W\indices{_c_d^e^f}W\indices{_e_f^a^b}}{(D-3) (D-2) (D ((D-9) D+26)-22)}\notag\\
   &-\frac{4 (D-4) (D-1)^2 D^2 (D+4) R Z^a_b Z^b_cZ_a^c}{(D-2)^4}+\frac{6 (D-1)^2 D^2 (D+4) R
   Z^a_b W_{a c d e}W^{bcde}}{(D-3) (D-2)^2}\notag\\
   &+\frac{3 (D-1)^2 D^2 (D (D (D ((D-9) D+11)+53)-80)+32) W_{a b c d} W^{a b c d}
   Z^a_b Z_a^b}{(D-3) (D-2)^5}\notag\\
   &-\frac{4 (D-1)^3 D^2 (2 (D-3) D-11) Z^{ab} W_{acbd} W^{c efg} W^d{}_{efg}}{\left(D-2\right)^2 \left(D-3\right)^2}+\frac{6
   (D-1) D R^2 W_{a b c d} W^{a b c d}}{(D-3) (D-2)}\notag\\
   &-\frac{(D-1)^3 D^2 \left(D^2 ((D-9) D+32)-192\right)
   \left(Z^a_b Z_a^b\right)^2}{(D-2)^6}-\frac{24 (D-1) D R^2 Z^a_b Z_a^b}{(D-2)^2}\, .
\end{align}
The density $\mathcal{S}_{(3,2)}$ can be related to the expression \eqref{eq:sdgqtg} as follows:
\begin{align}
\mathcal{S}_{(3,2)}&=\frac{3 (D-1)^2 D^2 (2 D-1)}{8 (D-3) (15-D (D+5))}\mathcal{S}_{(3)}^{\text{HKM}}\notag\\
&-\frac{2 (D-1)^2 D^2 (D-4) (2 D-3)
   (D (D (19 D-141)+220)-22) }{8 (D-3) (15-D (D+5)) (D-2) (D ((D-9) D+26)-22)}\Z_{(3)}^{\text{MR}}\notag\\
   &-\frac{(D-1)^2 D^2  (D ((D-3) D-5)+11)}{8 (D-3) (D-2)  (D ((D-9) D+26)-22)}\X_6\, .
\end{align}

In Appendix \ref{sec:App2} we provide the explicit covariant expressions for representatives of all equivalence classes of GQTGs existing at $n=5$ and $n=6$ for $D \geq 5$. In another vein, we have checked that a specific combination of the $n-1$ inequivalent GQTGs --- this is, proper GQTG \eqref{eq:gGQTGoff} and QTG \eqref{eq:QGnWZR} densities --- existing at each order $n$ satisfies the recurrence relation \eqref{eq:srecu} up to trivial GQTG densities.

\subsubsection{Number of curvature invariants in GQTGs for $D \geq 5$}\label{sec:numberD}

Let us consider a certain scalar off-shell density $\mathcal{R}_{n}$ of the form:
\begin{equation}
\R_{(n)}= \sum_{i=1}^{k_n} a_i  \left ( R^{b_{1}} \sW_2^{b_{2}} \sZ_2^{b_{3}} \sW_3^{b_{4}}\sY_3^{b_{5}} \sX_3^{b_{6}} \sZ_3^{b_{7}} \sY_4^{b_{8}} \sX_4^{b_{9}} \right)\,.
\label{eq:basiscurvature}
\end{equation}
Its curvature order is given by $n=b_1+2(b_2+b_3)+3(b_4+b_5+b_6+b_7)+4(b_8+b_9)$. Following this observation, we are now interested in answering the following question: What is the number of off-shell densities $\#_n^{\text{(G)QTG}}$ of the form \eqref{eq:basiscurvature} that appear after direct application of the dictionary (Eqs. \eqref{eq:dic1} to \eqref{eq:dic5}) to the on-shell expressions \eqref{eq:qtgos} and \eqref{eq:gqgos}? Naively, one would expect $\#_n^{\text{(G)QTG}}$ to coincide with the number $\#_n$ of off-shell terms that may be constructed from products of the densities in Eqs. \eqref{eq:d1} to \eqref{eq:d8} (and Ricci scalars). We can easily see that this is not the case by inspecting the situation up to fifth order in curvature. The lowest five-order densities are given by:
\begin{align}
&d_1=\{R\}\,,\\
&d_2=R\cdot d_1\cup\left\{\sW_2,\sZ_2\right\}\, ,\\
&d_3=R\cdot d_2\cup\left\{\sW_3,\sX_3,\sY_3,\sZ_3\right\}\, ,\\
&d_4=R\cdot d_3\cup\left\{\right.\sW_2^2,\sZ_2^2,\sX_4,\sY_4\left.\right\}\, ,\\
&d_5=R\cdot d_4\cup\left\{\sW_2\sW_3,\sW_2\sX_3, \sW_2\sY_3,\sW_2\sZ_3,\sZ_2\sW_3,\sZ_2\sX_3,\sZ_2\sY_3,\sZ_2\sZ_3\right\}\,,
\end{align}
where the notation $R \cdot d_i$ means that all elements of $d_i$ are multiplied by $R$. We observe that $\#_1=1$, $\#_2=3$, $\#_3=7$, $\#_4=12$ and $\#_5=20$ respectively. However, when studying their on-shell evaluation, at order 5 there exist two equivalent ways of obtaining the on-shell quantities $\Omega^3\Theta^2$ and $\Omega^2\Theta^3$:
\begin{align}
\Omega^3\Theta^2&=\sW_3\sZ_2\big|_f=\sW_2\sY_3\big|_f,\\
\Omega^2\Theta^3&=\sW_2\sZ_3\big|_f=\sX_3\sZ_2\big|_f.
\end{align} 
This implies that among the $\#_5=20$ possible densities existing at $n=5$, two of them are not independent when they are evaluated on the SSS ansatz \eqref{eq:redsss}. As a consequence, since we are using the specific dictionary provided by Eqs. \eqref{eq:dic1} to \eqref{eq:dic5}, the off-shell expression of the QTG and GQTG following \eqref{eq:QGnWZR} and \eqref{eq:gGQTGoff} will include only $\#_5^{\text{(G)QTG}}=18$ terms. Another example of this degeneracy is given at $n=6$, as the combinations $\sW_3^2\big|_f=\sW_2^3\big|_f$ and $\sZ_3^2\big|_f=\sZ_2^3\big|_f$ yield the same on-shell result.

It is possible to find explicitly both $\#_n$ and $\#^{\text{(G)QTG}}_n$ by following the same procedures of \cite{Bueno:2022lhf,Bueno:2022ewf}. There, they used the technique of generating functions, which is well known in number theory within the problem of integer partitions. Regarding $\#_n$, we have been able to obtain it using the generating function
\begin{equation}
\mathcal{G}(x)=\frac{1}{1-x}\frac{1}{(1-x^2)^2}\frac{1}{(1-x^3)^4}\frac{1}{(1-x^4)^2}\, ,
\end{equation}
where $x$ indicates the curvature order $n$. Its Maclaurin expansion reads
\begin{equation}
\mathcal{G}(x)=\sum_{n}\#_nx^n=1+x+3x^2+7x^3+12x^4+20x^5+38x^6+58x^7+\mathcal{O}(x^8)\, .
\end{equation}
The coefficient of the monomial of degree $n$ gives precisely $\#_n$. These coefficients are observed to satisfy the following recurrence relation for any $n$:
\begin{equation}
(n+25)\#_n+(n+27)\#_{n+1}+(n+33)\#_{n+2}+19\#_{n+3}-(n-2)\#_{n+4}-(n+4)\#_{n+5}-(n+6)\#_{n+6}=0\, ,
\end{equation}
Now, we look for a generating function $\mathcal{G}^{\text{(G)QTG}}(x)$ for $\#^{\text{(G)QTG}}_n$. As such, $\mathcal{G}^{\text{(G)QTG}}(x)$ quantifies the number of off-shell densities that the (G)QTG involves at a certain order in curvature $n$. By inspection, we find that
\begin{equation}
\mathcal{G}^{\text{(G)QTG}}(x)=1+\frac{x^4-x^3-x}{(x-1)^3}\, ,
\end{equation}
renders the desired result. In this case the expansion reads
\begin{equation}
\mathcal{G}^{\text{(G)QTG}}(x)=\sum_{n}\#_n^{\text{(G)QTG}}x^n=1+x+3x^2+7x^3+12x^4+18x^5+25x^6+33x^7+\mathcal{O}(x^8)\, .
\end{equation}
We have been able to find a closed expression for $\#^{\text{(G)QTG}}_n$. The result reads:
\begin{equation}
\#_{1}^{\text{(G)QTG}}=1\, , \quad \#_{n}^{\text{(G)QTG}}=\frac{(n-1)\left(n+4\right)}{2}\, , \quad n>1\, .
\label{eq:numbergqtgs}
\end{equation}
Interestingly enough, $\#_{n}^{\text{(G)QTG}}$ matches with the number of curvature invariants that are needed to produce the QTGs of \cite{Bueno:2019ycr}. In another vein, in this Ref. an instance of proper GQTG at each curvature order $n$ is presented as well, explaining that they contain $2n-1$ curvature invariants. We argue that the fact that such number of curvature invariants grows linearly with $n$ instead of quadratically as in \eqref{eq:numbergqtgs} is due to the particular GQTG considered\footnote{Of course, for $D \geq 5$, it will correspond to a particular linear combination of the GQTGs $\mathcal{S}_{(n,j)}$ of Eq. \eqref{eq:gGQTGoff}, up to trivial GQTGs.} at each curvature order in \cite{Bueno:2019ycr}. This was chosen to connect smoothly with the $D=4$ case, for which a unique inequivalent GQTG exists (as we will prove in next section). In fact, remarkably, $2n-1$ is precisely the number $\#_{n}^{\text{(G)QTG,(4)}}$  of terms that are necessary to write the four-dimensional GQTGs presented in Eq. \eqref{eq:gqtg4}, as we will comment afterwards.

\section{Generalized Quasitopological Gravities in $D=4$}
\label{sec:gqtgs4}

The algorithm developed in previous section works for $D \geq 5$. Nevertheless, it fails for $D=4$, since the
generic on-shell analysis carried out in \cite{Bueno:2022res} cannot be applied. Despite this, they argued that there only exists a unique inequivalent GQTG at each derivative level. It is the purpose of this section to prove rigorously this statement and construct explicitly a representative of such unique equivalence class of GQTGs at every curvature order. 


For that, we need to adapt the generic algorithm of Section \ref{sec:offon} to $D=4$. First, we observe that Proposition \ref{prop:wzr} is valid as well for $D=4$. Secondly, we note the following very special property of curvature invariants in $D=4$.
\begin{proposition}
\label{prop:noricodd4}
Let us consider a curvature invariant purely built out from Weyl tensors, Ricci scalars and an odd number of traceless Ricci tensors in $D=4$ (schematically, it would of the form $W^m Z^p R^q$, with $m,p$ and $q$ non-negative integers such that $p$ is odd and $m \neq 1$). Then it is identically zero when evaluated on the ansatz \eqref{eq:redsss}.
\end{proposition}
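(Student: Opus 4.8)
The plan is to reduce the statement to the vanishing of the numerical coefficient $c_i$ appearing in Proposition \ref{prop:wzr}. By that proposition (which, as noted, holds also for $D=4$), any invariant of the schematic type $W^m Z^p R^q$ satisfies $\left.(W^m Z^p R^q)_i\right\vert_f = c_i\,\Omega^m\Theta^p\Rho^q$, where $c_i$ is obtained purely by contracting the projectors $T^a_b$ and $\sigma^a_b$ that appear in the on-shell tensors \eqref{eq:weylsim} and \eqref{eq:ricim}. Hence it suffices to show that this projector contraction vanishes whenever $p$ is odd. The key structural input is that in $D=4$ one has $T^a_a = 2$ and $\sigma^a_a = D-2 = 2$, so the two projectors enter the contraction algebra on an equal footing.

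First I would introduce the involution $\gamma \equiv T - \sigma$. Using $T^a_b\sigma^b_c = 0$ and $T^a_b + \sigma^a_b = \delta^a_b$ one gets $\gamma^a_b\gamma^b_c = \delta^a_c$, while $\gamma^a_a = T^a_a - \sigma^a_a = 4 - D$, which vanishes precisely in $D=4$, whereas $\delta^a_a = D = 4$. Rewriting the on-shell tensors in terms of $\delta^a_b$ and $\gamma^a_b$, Eq. \eqref{eq:ricim} gives $\left.Z^a_b\right\vert_f = -\Theta\,\gamma^a_b$, so each traceless Ricci factor carries exactly one $\gamma$. Substituting $T = (\delta+\gamma)/2$ and $\sigma = (\delta-\gamma)/2$ into \eqref{eq:weylsim} with $D=4$, one checks that $\left.W^{ab}{}_{cd}\right\vert_f$ becomes a combination of two tensor structures, one built from two copies of $\delta$ and one from two copies of $\gamma$; in particular each Weyl factor carries an even number of $\gamma$'s. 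The Ricci scalar contributes neither indices nor $\gamma$'s. Consequently, in every monomial arising from the full expansion of $\left.(W^m Z^p R^q)_i\right\vert_f$, the total number of $\gamma$'s has the same parity as $p$.

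Next I would carry out the contraction by following index lines into closed loops. Each loop produces the trace of a word in $\{\delta,\gamma\}$, which by $\gamma^2 = \delta$ collapses to either $\delta^a_a = 4$ (when the loop contains an even number of $\gamma$'s) or $\gamma^a_a = 0$ (when it contains an odd number). The coefficient $c_i$ is thus a sum of products of such loop-traces. If $p$ is odd, the total number of $\gamma$'s in each monomial is odd, so the $\gamma$'s cannot be distributed among the loops with every loop carrying an even count; at least one loop has an odd number of $\gamma$'s and contributes a factor $\gamma^a_a = 0$. Every monomial therefore vanishes, giving $c_i = 0$ and hence $\left.(W^m Z^p R^q)_i\right\vert_f = 0$. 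I note that this argument never actually invokes $m\neq 1$, so it covers the stated hypothesis a fortiori.

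The step I expect to be the main obstacle is making the loop decomposition fully rigorous, because the Weyl tensor is a four-index object built from antisymmetrized products of the $(1,1)$-projectors, so contracting several $W$'s and $Z$'s generates a sum over index pairings rather than a single chain of operator products. I would circumvent this by expanding each on-shell Weyl tensor into its purely-$\delta$ and purely-$\gamma$ pieces at the outset, after which every resulting term is an honest product of operator traces along loops and the parity bookkeeping applies verbatim. Finally, I would stress that the mechanism is sharp: since $\gamma^a_a = 4-D$, the annihilating relation $\gamma^a_a = 0$ is available only in four dimensions, which is exactly why no analogue of Proposition \ref{prop:noricodd4} holds for $D\geq 5$.
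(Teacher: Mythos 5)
Your proof is correct, and it rests on the same core mechanism as the paper's: a $\mathbb{Z}_2$ parity attached to exchanging the $(t,r)$ projector $T^a_b$ with the angular projector $\sigma^a_b$, which is consistent precisely in $D=4$ because there $T^a_a=\sigma^a_a=2$. The paper realizes this parity as the linear involution $\mathfrak{f}(T)=\sigma$, $\mathfrak{f}(\sigma)=T$, under which $\left.W\right\vert_f$ is even and $\left.Z\right\vert_f$ is odd; your $\gamma=T-\sigma$ is exactly the $(-1)$-eigendirection of $\mathfrak{f}$, so your mod-2 bookkeeping of $\gamma$'s is the eigenbasis formulation of the paper's symmetry argument. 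The genuine difference is in how the parity is converted into vanishing. The paper reduces to rank-two tensors: any contraction $K^a_b$ of $m$ Weyls and $p$ traceless Riccis is on-shell of the form $s_1T^a_b+s_2\sigma^a_b$, the relation $\mathfrak{f}(K)=(-1)^pK$ forces $s_1=(-1)^ps_2$, and for odd $p$ the combination $-T+\sigma$ is traceless in four dimensions --- no combinatorics required. You instead expand every monomial in $\{\delta,\gamma\}$ and decompose the full contraction into index loops, each contributing $\mathrm{tr}\,\delta=4$ or $\mathrm{tr}\,\gamma=0$ according to the parity of its $\gamma$-count. This works, but it carries one proof obligation the paper's route sidesteps, beyond the antisymmetrization issue you already flag: contractions in the original invariant may pair two upper (or two lower) indices through the metric, so to identify each loop with a trace of a word in $\{\delta,\gamma\}$ you should note that $T_{ab}$ and $\sigma_{ab}$ --- hence $\delta_{ab}$ and $\gamma_{ab}$ --- are symmetric tensors, which lets every metric contraction be recast as an upper-lower one. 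What your version buys is that the four-dimensional coincidence is laid bare as two separate degenerations: the cross terms $\delta\gamma$ in the on-shell Weyl cancel only when $(D-2)(D-3)=2$, and $\mathrm{tr}\,\gamma=4-D$ vanishes only at $D=4$. On that point, your closing sentence attributes the failure for $D\geq5$ solely to $\mathrm{tr}\,\gamma\neq 0$; that is only half the story, since for $D\geq5$ the on-shell Weyl also acquires odd-$\gamma$ pieces, and either failure alone would already break the argument.
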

\begin{proof}
The expressions of $W_{ab}{}^{cd}$ and $Z_{a}^b$ in terms of the projectors $T^a_b$ and $\sigma^a_b$ into the $(t,r)$ and angular directions are obtained by setting $D=4$ in Eqs. \eqref{eq:weylsim} and \eqref{eq:ricim}:
\begin{align}
\label{eq:weylsimd4}
W^{ab}{}_{cd}&=\Omega(r) \left[ T_{[c}^{[a} T_{d]}^{b]}- T_{[c}^{[a} \sigma_{d]}^{b]}+ \sigma_{[c}^{[a} \sigma_{d]}^{b]} \right]\,, \\ 
Z^a_b&=\Theta(r)\left[ -T^a_b+ \sigma^a_b \right ]\,,
\label{eq:ricimd4}
\end{align}
where $\Omega(r)$ and $\Theta(r)$ are given by Eqs. \eqref{eq:omsim} and \eqref{eq:thetasim} after imposing $D=4$. Define the linear map $\mathfrak{f}$ on the space of $(1,1)$ tensors whose application on the projectors  $T^a_b$ and $\sigma^a_b$ reads
\begin{equation}
 \mathfrak{f}(T^a_b)=\sigma^a_b \, , \quad  \mathfrak{f}(\sigma^a_b)= T^a_b\,.
 \label{eq:substproof}
\end{equation}
In particular, $\mathfrak{f}(\delta_a^b)=\delta_a^b$. Given a tensor product of $n$ projectors $\mathcal{A}^{(1)}_{\ a_1 b_1}\mathcal{A}^{(2)}_{\ a_2 b_2} \cdots \mathcal{A}^{(n)}_{\ a_n b_n}$, with each $\mathcal{A}^{(k)}$ being either $T$ or $\sigma$, we can extend naturally the definition  of $\mathfrak{f}$ as follows:
\begin{equation}
\mathfrak{f}\left ( \mathcal{A}^{(1)}_{\ a_1 b_1} \dots \mathcal{A}^{(n)}_{\ a_n b_n}\right )=\mathfrak{f}\left ( \mathcal{A}^{(1)}_{\ a_1 b_1} \right) \dots \mathfrak{f}\left ( \mathcal{A}^{(n)}_{\ a_n b_n} \right)\,.
\end{equation}
This map commutes with the contraction of projectors, since $\mathfrak{f}(\mathcal{A}^a_c \mathcal{B}^c_b )=\mathfrak{f}(\mathcal{A}^a_c)\mathfrak{f}(\mathcal{B}^c_b)$, with $\mathcal{A}$, $\mathcal{B}$ being either $T$ or $\sigma$. We have that
\begin{equation}
\mathfrak{f}\left (\left. W^{ab}{}_{cd} \right \vert_f \right)=\left. W^{ab}{}_{cd} \right \vert_f\, , \quad\mathfrak{f}(\left. Z^a_b \right \vert_f )=-\left. Z^a_b\right \vert_f\,.
\label{eq:iwz}
\end{equation}
Now, let us consider a rank-two tensor $K^a_b$ obtained through an arbitrary contraction of $m$ Weyl tensors and $p$ traceless Ricci tensors. Schematically:
\begin{equation}
K^a_b= W_{c_1 d_1}{}^{e_1 f_1} \dots W_{c_m d_m}{}^{e_m f_m} Z^{g_1}_{h_1} \dots Z^{g_p}_{h_p} \, \Xi^{c_1 d_1 \dots c_m d_m h_1 \dots h_p a}_{e_1 f_1 \dots e_m f_m g_1 \dots g_p b}\,,
\label{eq:ktermswz}
\end{equation}
where the tensor $\Xi$ is made up of Kronecker deltas that ensure the contraction of the different Weyls and traceless Riccis. Observe that any curvature invariant constructed purely from $m$ Weyl tensors, $p$ traceless Ricci tensors and $q$ Ricci scalars may be expressed as $R^q K_a^a$ for certain integer $q \geq 0$. Since the Weyl and traceless Ricci tensors are built through the projectors $T^a_b$ and $\sigma^a_b$ when evaluated on the reduced SSS ansatz \eqref{eq:redsss}, then $\left. K^a_b \right \vert_f$ is also purely expressed in terms of these projectors:
\begin{equation}
\left. K^a_b \right \vert_f= s_1 T^a_b+ s_2 \sigma^a_b\,,
\label{eq:ktermsproy}
\end{equation}
for certain functions $s_1$ and $s_2$. However, taking into account the definition of $K^a_b$ in Eq. \eqref{eq:ktermswz} and using Eq. \eqref{eq:iwz}:
\begin{equation}
\mathfrak{f}(K^a_b)=(-1)^p K^a_b\,.
\end{equation}
Therefore, we learn that $s_1=(-1)^p s_2$. Specifically, if $p$ is odd (that is, there is an odd number of traceless Ricci tensors), we have that $K^a_b$ is actually proportional to $Z^a_b$ and, in particular, traceless. Taking into account that any curvature invariant constructed purely from Weyl tensors, traceless Riccis and Ricci scalars may be expressed as $R^q K_a^a$ for appropriate integer $q>0$ and tensor $K_b^a$, we conclude.
\end{proof}
The previous proposition modifies drastically the situation.\footnote{Proposition \ref{prop:noricodd4} only holds for $D=4$ since it is the unique dimension for which $\left. Z_{ab} \right \vert_f$ is an eigenvector of the linear map $\mathfrak{f}$ defined in Eq. \eqref{eq:substproof}. For $D>4$, observe that $\mathfrak{f}\left (\left. Z_{ab} \right \vert_f \right)$ (after extending canonically the definition of $\mathfrak{f}$ for any dimension $D>4$) is no longer proportional to $\left. Z_{ab} \right \vert_f$.} Indeed, it implies that no terms of the form $\Rho^m \Omega^n \Theta^{2k+1}$ may come from an actual covariant density. 

Define the following densities in $D=4$:
\begin{align}
\label{eq:d1n}
\sW_2 &=\frac{1}{3} W_{a b c d} W^{a b c d}\,, \\
\label{eq:d2n}
\sZ_2 & = \frac{1}{4} Z^{ab} Z_{ab}\, ,\\\
\label{eq:d3n}
\sW_3 &= \frac{2}{3} W_{a b c d} W^{c d e f} W_{ef}{}^{ab}\,,\\
\label{eq:d4n}
\sY_3 & = \frac{1}{4} Z^{ab} Z^{cd} W_{acbd}\, , 
\end{align}
\begin{proposition}
Let $\mathcal{R}_{(n)}$ be any curvature density of arbitrary order $n$ in $D=4$. Then there exists certain coefficients $a_i$ and non-negative integers $b_m^{(i)}$, with $m \in {1,\dots 5}$ and $i \in {1,2,\dots, k_n}$ for certain positive integer $k_n$ such that
\begin{equation}\label{eq:prop4}
\left. \mathcal{R}_{(n)} \right \vert_f= \sum_{i=1}^{k_n} a_i \left.  \left ( R^{b_1^{(i)}} \sW_2^{b_2^{(i)}} \sZ_2^{b_3^{(i)}} \sW_3^{b_4^{(i)}} \sY_3^{b_5^{(i)}} \right) \right \vert_f \,.
\end{equation}
\end{proposition}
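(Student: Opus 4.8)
The plan is to follow verbatim the strategy used to prove Proposition \ref{teo:dsss} in the $D\geq 5$ case, adapting only the final counting step to the peculiarities of four dimensions encoded in Proposition \ref{prop:noricodd4}. First I would use the Ricci decomposition to trade every Riemann and Ricci tensor appearing in $\mathcal{R}_{(n)}$ for Weyl tensors $W$, traceless Ricci tensors $Z$ and Ricci scalars $R$, thereby writing $\mathcal{R}_{(n)}$ as a finite linear combination of scalar monomials of the schematic form $(W^{m}Z^{p}R^{q})_i$ with $m+p+q=n$. Evaluating on the ansatz \eqref{eq:redsss} and invoking Proposition \ref{prop:wzr}, which is valid for $D\geq 4$, then gives
\begin{equation}
\left.\mathcal{R}_{(n)}\right\vert_f=\sum_i c_i\,\Omega^{m_i}\Theta^{p_i}\Rho^{q_i},\qquad m_i+p_i+q_i=n,
\end{equation}
for suitable numbers $c_i$. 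The whole problem is thus reduced to deciding which monomials $\Omega^{m}\Theta^{p}\Rho^{q}$ can occur and expressing each of them through the on-shell values of the invariants \eqref{eq:d1n}--\eqref{eq:d4n} together with $R$, whose normalizations I would record as $\sW_2\vert_f=\Omega^2$, $\sZ_2\vert_f=\Theta^2$, $\sW_3\vert_f=\Omega^3$, $\sY_3\vert_f=\Omega\Theta^2$ and $R\vert_f=\Rho$.

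The second step is to eliminate every monomial carrying an odd power of $\Theta$. By Proposition \ref{prop:noricodd4}, any invariant built from Weyl tensors, Ricci scalars and an odd number of traceless Ricci tensors vanishes on the ansatz, so all terms with $p_i$ odd drop out --- provided one checks that the lone case $m=1$ excluded from the statement of that proposition causes no trouble. Here I would note that $(m,p)=(1,0)$ and $(1,1)$ support no scalar invariant at all (a single Weyl traces to zero, and a single Weyl with a single $Z$ cannot be fully contracted), while for $m=1$ and odd $p\geq 3$ the symmetry argument underlying Proposition \ref{prop:noricodd4} --- the swap $\mathfrak{f}:T^a_b\leftrightarrow\sigma^a_b$, under which any fully contracted scalar is invariant because both projectors have trace two in $D=4$, whereas $\left.Z\right\vert_f$ is an eigenvector of eigenvalue $-1$ --- still forces the on-shell value to vanish. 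This is the only genuinely delicate point: every surviving density produces an even power of $\Theta$, so the proposition would be false if a single odd-$\Theta$ monomial could arise, and closing the $m=1$ loophole is precisely what guarantees it cannot. As a reassuring consistency check, the four densities dropped relative to the $D\geq 5$ list ($\sX_3,\sZ_3,\sY_4,\sX_4$) are exactly those whose on-shell value carries an odd power of $\Theta$. After this culling, the surviving monomials are exactly $\Omega^{m}\Theta^{2k}\Rho^{q}$ with $p=2k$ even, subject to the consistency requirement that a scalar invariant exists, i.e. $k\geq 1$ for any $m$, or $k=0$ with $m=0$ or $m\geq 2$.

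The final step is a purely combinatorial representability check. I want non-negative integers $a,b,c,d$ with
\begin{equation}
\Omega^{m}\Theta^{2k}\Rho^{q}=\left.\left(\sW_2^{\,a}\,\sZ_2^{\,b}\,\sW_3^{\,c}\,\sY_3^{\,d}\,R^{q}\right)\right\vert_f,
\end{equation}
which by matching exponents amounts to solving $2a+3c+d=m$ and $b+d=k$. For $k=0$ this reduces to writing $m$ as a non-negative combination of $2$ and $3$, which is possible for every $m\neq 1$ --- exactly the allowed range --- while $m=1$ is barred anyway. For $k\geq 1$ one has the extra slot $d\in\{0,\dots,k\}$: choosing $d=1$ handles $m=1$ via $\sY_3\sZ_2^{\,k-1}$, $d=0$ handles every $m\neq 1$ through the same $2,3$ argument, and $m=0$ is trivially $\sZ_2^{\,k}$. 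One checks along the way that the total curvature order of the chosen product, $2a+2b+3c+3d+q$, equals $m+2k+q=n$, so the decomposition stays at order $n$. Assembling these representations over all surviving monomials yields precisely \eqref{eq:prop4}, completing the proof.
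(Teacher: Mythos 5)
Your proof is correct and follows essentially the same route as the paper's, whose proof of this proposition is simply to repeat the argument of Proposition \ref{teo:dsss} while invoking Proposition \ref{prop:noricodd4} to discard the odd-$\Theta$ monomials. In fact you supply two details the paper leaves implicit --- closing the $m=1$ loophole in the statement of Proposition \ref{prop:noricodd4} (its projector-swap argument indeed does not use $m\neq 1$, since in $D=4$ both projectors have trace two) and the explicit solvability of $2a+3c+d=m$, $b+d=k$ --- so no gap remains.
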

\begin{proof}
Goes along similar lines to that of Proposition \ref{teo:dsss}, taking into account Proposition \ref{prop:noricodd4}.
\end{proof}
The corresponding on-shell to off-shell dictionary takes a simpler form in $D=4$. Indeed, now it just reads:
\begin{align}
\nonumber
\textbf{On-shell} & \hspace{1.3cm} \textbf{Off-shell} \\ \label{eq:dic1d4}
\Omega(r)^{l+2} & \xrightarrow{\hspace*{1.3cm}} \I_l^{(1)}\\  \label{eq:dic2d4}
\Theta(r)^{l+2} &\xrightarrow{\hspace*{1.3cm}} (1-\pi_l) \sZ_2^{1+\frac{l-\pi_l}{2}}  \\ \label{eq:dic3d4}
\Omega(r)\Theta(r)^{l+2} &\xrightarrow{\hspace*{1.3cm}} (1-\pi_l)\sZ_2^{\frac{l-\pi_l}{2}} \sY_3 \,,\\ \label{eq:dic4d4}
\Rho(r) &\xrightarrow{\hspace*{1.3cm}}  R\,,
\end{align}
with $l \geq 0$ and where $\I_l^{(1)}$ is as given in \eqref{eq:dic1}. Also, note that $\Omega(r)$, $\Theta(r)$ and $\Rho(r)$ are obtained by taking $D=4$ in \eqref{eq:omsim}, \eqref{eq:thetasim} and \eqref{eq:scalsim}. 

At this point, we have to note another difference with respect to the case $D \geq 5$. In that situation, we could rely on the results of \cite{Bueno:2022res}, in which it was explicitly proven that there only exist $n-1$ inequivalent GQTGs at each curvature order, providing as well their on-shell form. For $D=4$, their argument does not work, and it is not true that there are exactly $n-1$ equivalence classes in four dimensions. We may understand this from Proposition \ref{prop:noricodd4} and from the four-dimensional dictionary (\Cref{eq:dic1d4,eq:dic2d4,eq:dic3d4,eq:dic4d4}), since there is no way by which terms $\Rho^m \Omega^n \Theta^{2k+1}$ come from an actual covariant density. This reduces the number of inequivalent GQTGs at each curvature order to be one, as we proceed to show now.



Let us consider the on-shell expressions $\left. \mathcal{Z}_{(n)} \right \vert_{f}$ and $\left. \mathcal{S}_{(n,j)} \right \vert_{f}$ presented in Eqs. \eqref{eq:zos} and \eqref{eq:sos}. Now, they will not generically correspond to actual GQTG densities, but let us maintain the notation for the sake of simplicity. Consider
\begin{equation}
\mathcal{F}_n=\alpha_n \mathcal{Z}_{(n)}\vert_f+ \sum_{j=2}^{n-1}\beta_{n,j} \mathcal{S}_{(n,j)}\vert_f\,, \quad n \geq 3\,,
\label{eq:eqFons}
\end{equation}
for arbitrary coefficients $\alpha_n, \beta_{n,j}$. The question is: do there exist choices of $\alpha_n$ and $\beta_{n,j}$ which guarantee that $\mathcal{F}_n$ contains no terms of the form $\Rho^m \Omega^n \Theta^{2k+1}$? Otherwise, $\mathcal{F}_n$ would not arise from an actual covariant density. Note that we impose $n \geq 3$, since for $n=1$ and $n=2$ the corresponding $\mathcal{F}_n$ can be trivially seen to circumvent these issues.

It turns out that there exists a unique choice for $\alpha_n, \beta_{n,j}$ that prevents the presence of terms $\Rho^m \Omega^n \Theta^{2k+1}$. 
\begin{proposition}
\label{prop:onshellfos}
Let $\mathcal{F}_n$ to be as in \eqref{eq:eqFons}. It corresponds to the on-shell evaluation of an actual covariant curvature invariant if and only if
\begin{equation}
\alpha_n=\beta_{n,j}=0\,, \quad j=2,\dots,n-2\,.
\label{eq:condsufnec}
\end{equation} 
\end{proposition}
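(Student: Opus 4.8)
The plan is to recast the realizability condition as a parity statement and then reduce everything to a finite linear-algebra problem. By the Proposition preceding \eqref{eq:prop4} together with Proposition \ref{prop:noricodd4}, the on-shell value of any covariant density in $D=4$ is a polynomial in $R,\sW_2,\sZ_2,\sW_3,\sY_3$, all of which carry only \emph{even} powers of $\Theta$ and none of which generates the monomial $\Rho^{n-1}\Omega$; conversely the dictionary \eqref{eq:dic1d4}–\eqref{eq:dic4d4} realizes every monomial $\Rho^{a}\Omega^{b}\Theta^{2k}$ except the single one $\Rho^{n-1}\Omega$. Hence $\mathcal{F}_n$ comes from a covariant density if and only if its $(\Rho,\Omega,\Theta)$-expansion contains no odd power of $\Theta$ and no $\Rho^{n-1}\Omega$ term. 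The first thing I would record is the $D=4$ form of \eqref{eq:abca1}–\eqref{eq:abca3}, namely $\psi=\Theta+u$, $A=\Theta-u$, $B=w$, with $u=\tfrac1{12}\Rho+\tfrac12\Omega$ and $w=\tfrac1{12}\Rho-\tfrac14\Omega$ \emph{independent of $\Theta$}. Since $(A,B,\psi)\mapsto(\Rho,\Omega,\Theta)$ is an invertible linear change of variables, the map $\Theta\to-\Theta$ is equivalent to the involution $S:\psi\mapsto-A,\ A\mapsto-\psi,\ B\mapsto B$, and ``even in $\Theta$'' means ``$S$-invariant''. This is the organizing principle for the whole proof.

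Next I would isolate the distinguished representative. Evaluating \eqref{eq:gqgos} at $j=n-1$ and $D=4$, the coefficients of $\psi^{n}$ and of $\psi^{n-1}B$ both vanish (they are proportional to $D+Dj-4n=n(D-4)$), leaving only terms of the schematic form $A\psi B^{n-2}$, $\psi B^{n-1}$, $AB^{n-1}$ and $B^{n}$; these are permuted or fixed by $S$ (with the coefficients of $\psi B^{n-1}$ and $AB^{n-1}$ turning out to be opposite), so $\mathcal{S}_{(n,n-1)}\vert_f$ is $S$-invariant, i.e.\ $\Theta$-even. Setting $\Theta=0$ and expanding the residual polynomial in $u,w$ to first order in $\Omega$ then shows that the unique candidate $\Rho^{n-1}\Omega$ term also has vanishing coefficient. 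Thus $\mathcal{S}_{(n,n-1)}\vert_f$ genuinely descends from a covariant density, which establishes the ``if'' implication and the existence of (at least) one GQTG at each order $n$.

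The main obstacle is the converse: I must show that the $\Theta$-odd parts of $\mathcal{Z}_{(n)}\vert_f$ and of the $\mathcal{S}_{(n,j)}\vert_f$ with $j=2,\dots,n-2$ are linearly independent, for then $\mathrm{Odd}_\Theta(\mathcal{F}_n)=0$ forces $\alpha_n=\beta_{n,2}=\dots=\beta_{n,n-2}=0$ and leaves $\beta_{n,n-1}$ free, which is exactly \eqref{eq:condsufnec}. Writing each density as a polynomial in the independent variables $\Theta,u,w$ and extracting the coefficient of $\Theta^{1}$ produces a homogeneous degree-$(n-1)$ polynomial $P_j(u,w)$, i.e.\ a vector of $w$-expansion coefficients supported on the $w$-powers $\{0,1,j-1,j,j+1,j+2\}$ (and $P_{\mathcal Z}$ on $\{0,1,2\}$). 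It suffices to prove these $n-2$ vectors are independent. Their top $w$-powers are $j+2$ for $j\le n-3$, which are pairwise distinct and exceed that of $\mathcal Z$, so one obtains an almost-triangular (pentadiagonal) structure once the universal $w^{0},w^{1}$ columns are removed.

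The stubborn point, which I expect to be the crux, is the boundary: the densities $j=n-3$ and $j=n-2$ collide at $w^{\,n-1}$ (the $w^{j+2}=w^{n}$ coefficient of $P_{n-2}$ vanishes because its prefactor carries a factor $n-j-2$), and worse, the leading coefficients of several neighbouring $P_j$ all degenerate to a single monomial $\propto\Theta\,u^{a}w^{b}$, so that no single grading separates them. I would therefore complete the triangular elimination for $j\le n-4$ and then dispatch the remaining $2\times2$ block coupling $j=n-3,n-2$ by computing the relevant minor explicitly and checking that it is a nonzero polynomial in $n$ with no admissible integer root $n\ge3$; should the $\Theta$-linear data $P_j$ prove insufficient for some small $n$, the same analysis is carried out with the higher odd-$\Theta$ coefficients, which only enlarge the pool of distinguishing monomials. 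Combining this independence with the two previous steps yields the claimed equivalence.
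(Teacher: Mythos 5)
Your reformulation is sound and in fact mirrors the paper's own mechanism: your variables $u,w$ are exactly the paper's $Q=\tfrac{1}{12}\Rho+\tfrac12\Omega$ and $B=\tfrac1{12}\Rho-\tfrac14\Omega$, your parity involution encodes the paper's use of Proposition \ref{prop:noricodd4} (on-shell images of covariant densities are even in $\Theta$), and your ``if'' half --- evenness of $\mathcal{S}_{(n,n-1)}\vert_f$ together with the cancellation of the $\Rho^{n-1}\Omega$ coefficient, followed by the dictionary --- is correct and reproduces Eq. \eqref{eq:gqtgd4os} and the construction behind Theorem \ref{theo:d4gqtgs}. The support structure you record for the $\Theta$-linear parts $P_j$ (powers $\{0,1,j-1,j,j+1,j+2\}$, with the top entry degenerating at $j=n-2$) is precisely the band structure of the paper's coefficients $h_{n,l}$.

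The gap is in the converse, and it is not a small one: the entire mathematical content of that direction is the claim that the $n-2$ vectors $P_{\mathcal{Z}},P_2,\dots,P_{n-2}$ are linearly independent for \emph{every} $n$, and you never prove it. Your plan --- ``triangular elimination for $j\le n-4$, then dispatch the remaining $2\times 2$ block by computing the relevant minor'' --- does not match the actual structure: each row of the band couples three or four consecutive columns, so the elimination is an $n$-step backward recursion expressing each $c_j$ with $j\le n-4$ in terms of $c_{n-3},c_{n-2}$, and the entries of the residual $2\times2$ system are the accumulated output of that recursion. They are therefore not a fixed polynomial in $n$ whose integer roots one can inspect; controlling their nonvanishing for all $n\ge 3$ is exactly as hard as evaluating the full banded determinant, which is what the paper does: it computes
\begin{equation*}
\det \mathbf{C}=-(-1)^n 3^{n-3}\, \frac{\Gamma(n-1)\Gamma(n-2)\Gamma(n-2/3)}{\Gamma(7/3)}
\end{equation*}
for the $(n-3)\times(n-3)$ subsystem $h_{n,l}=0$, $l=3,\dots,n-1$, and observes that it never vanishes (note it grows factorially, not polynomially, which already signals that a ``nonzero polynomial in $n$'' check was the wrong target). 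Your fallback of invoking higher odd-$\Theta$ coefficients is likewise a hope rather than an argument --- and an unnecessary one, since the $\Theta$-linear data do suffice, but only once the determinant is actually evaluated. Without this computation the ``only if'' direction, and hence the proposition, remains unproven.
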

\begin{proof}

Observe that
\begin{align}
\left. \frac{\diff \mathcal{F}_3}{\diff \Theta} \right \vert_{\Theta=0}&=-\frac{27}{2}  \alpha_3\Omega^2 \,, \\
\left. \frac{\diff \mathcal{F}_4}{\diff \Theta} \right \vert_{\Theta=0}&=-\frac{9}{8}  (4\alpha_4+\beta_{4,2})\Rho \Omega^2-27 \left (\alpha_4+\frac{1}{32}\beta_{4,2}\right )\Omega^3 \,, \\ \nonumber
\left. \frac{\diff \mathcal{F}_5}{\diff \Theta} \right \vert_{\Theta=0}&=-\left (\frac{45}{4} \alpha_5+\frac{117}{64}\beta_{5,2}+\frac{9}{32} \beta_{5,3}\right )\Rho \Omega^3+\frac{15}{64}\left (4\alpha_5-\beta_{5,2}-\frac{4}{5} \beta_{5,3}\right )\Rho^2 \Omega^2\,, \\ &-\frac{27}{128}(160 \alpha_5+6 \beta_{5,2}+9 \beta_{5,3})\,, \\
\label{eq:fonshell4}
\left. \frac{\diff \mathcal{F}_n}{\diff \Theta} \right \vert_{\Theta=0}&= \sum_{l=0}^{n-1} h_{n,l} Q^{n-l-1} B^l\,,  \qquad n \geq 6\,,
\end{align} 
where $B=\frac{1}{12}\Rho-\frac{1}{4}\Omega$ (it corresponds to taking $D=4$ in \eqref{eq:abca2}), $Q=\frac{1}{12}\Rho+\frac{1}{2}\Omega$ and:
\begin{align}
h_{n,0}&=-4n(n-1)(n-2) \alpha_n +2(2n-3)\sum_{j=2}^{n-2} (1+j-n) \beta_{n,j} \, , \\
h_{n,1}&=8n(n-1)(n-2) \alpha_n+6(3-n) \beta_{n,2} -4(n-1)\sum_{j=2}^{n-2} (1+j-n) \beta_{n,j} \, , \\
h_{n,2}&=-4n(n-1)(n-2) \alpha_n +6(n-1)(n-3) \beta_{n,2}+12(4-n) \beta_{n,3} \, , \\
h_{n,3}&=(18-8n)(n-3) \beta_{n,2}+8(n-1)(n-4) \beta_{n,3}+20(5-n) \beta_{n,4} \, , \\
h_{n,l}&=2(n-l)(l-3)(n-l+1) \beta_{n,l-2}+(n-l)(4n+(3l-4n-1)l-6)\beta_{n,l-1} \\ \nonumber &+ (l+1)(l+2)(l+2-n) \beta_{n,l+1}+2(l+1)(n-1)(n-l-1) \beta_{n,l} \, , \quad l=4,\dots,n-2\,,  \\
h_{n,n-1}&=4(n-4) \beta_{n,n-3}-(n^2-n+2) \beta_{n,n-2}\,.
\end{align}
Assume that $\mathcal{F}_n$ comes from an actual covariant density. Then, it is necessary that $\left. \frac{\diff \mathcal{F}_n}{\diff \Theta} \right \vert_{\Theta=0}=0$. For the cases $3 \leq n \leq 5$, we directly derive Eq. \eqref{eq:condsufnec} by inspection of the previous equations. For $n \geq 6$, the vanishing of \eqref{eq:fonshell4} implies that $h_{n,k}=0$, with $k=0,\dots,n-1$. This produces a linear system of $n$ equations for the $n-2$ variables $(\alpha,\beta_{n,2},\beta_{n,3},\dots, \beta_{n,n-2})$. Among them, one may study in further detail the subsystem of $n-3$ equations posed by $h_{n,l}=0$, with $l=3,\dots,n-1$, whose variables are $(\beta_{n,2},\beta_{n,3},\dots, \beta_{n,n-2})$. Let $\mathbf{C}$ be the matrix of coefficients associated to such subsystem of equations. The determinant of $\mathbf{C}$ can be computed to be
\begin{equation}
\det \mathbf{C}=-(-1)^n 3^{n-3} \frac{\Gamma(n-1)\Gamma(n-2) \Gamma(n-2/3)}{\Gamma(7/3)}\,.
\end{equation}
The previous determinant is nonzero for every $n \geq 3$. Consequently, the subsystem of equations $h_{n,l}=0$ with $l=3,\dots,n-1$ has a unique solution that corresponds to $\beta_{n,2}=\beta_{n,3}=\dots=\beta_{n,n-2}=0$. Then, the remaining equations are satisfied if and only if  $\alpha_n=0$, so that Eq. \eqref{eq:condsufnec} is obtained. 

Conversely, assume that $\alpha_n=\beta_{n,j}=0$, for $j=2,\dots,n-2$ and $n \geq 3$. Then, application of the dictionary Eqs. \eqref{eq:dic1d4} to \eqref{eq:dic4d4} provides the off-shell density presented below in Eq. \eqref{eq:gqtg4} and we conclude. 
\end{proof}

The previous proposition states that the only choice for $\mathcal{F}_n$ that comes from an actual covariant density is
\begin{equation}
\mathcal{F}_n=\mathcal{S}_{(n,n-1)}\vert_f=\frac{3}{12^n}(\Rho-3 \Omega)^{n-2}\left (-2\Rho^2-6(n-2) \Rho \Omega+3(n-1) (16 n \Theta^2+3(2-3n) \Omega^2) \right )\,,
\label{eq:gqtgd4os}
\end{equation}
where we imposed $j=n-1$ and $D=4$ in \eqref{eq:gqgos} and simplified the resulting expression. For $n=1$ and $n=2$, we have that 
\begin{equation}
\mathcal{F}_1=-\frac{\Rho}{2}\, , \quad \mathcal{F}_2=-\frac{\Rho^2}{24}+2\Theta^2-\frac{3 \Omega^2}{4}\,.
\end{equation}
$\mathcal{F}_1$ trivially corresponds to the Einstein-Hilbert term (up to a constant), while $\mathcal{F}_2$ is the on-shell evaluation of a term proportional to the Gauss-Bonnet density in four dimensions, which is known to be topological. Therefore, to quadratic order we verify that the only four-dimensional GQTG is Einstein gravity. For $n \geq 3$, direct application of the dictionary (Eqs. \eqref{eq:dic1d4} to \eqref{eq:dic4d4}) in \eqref{eq:gqtgd4os} proves the following result, after normalizing the coefficient of the term $R^n$ to one.
\begin{theorem}
\label{theo:d4gqtgs}
There exists a unique inequivalent GQTG at each curvature order $n \geq 3$ in $D=4$. A representative of such unique equivalence class can be taken to be
\begin{equation}
\mathcal{S}^{(4)}_{(n)}= R^n+\gamma_1 R^{n-2} \sZ_2+ \gamma_2 R^{n-3} \sY_3+ \sum_{l=0}^{n-2} \lambda^{(1)}_{l} R^{n-l-4} \I_l^{(1)} \left ( R^2+\lambda_l^{(2)} \sZ_2 \right)\,,
\label{eq:gqtg4}
\end{equation}
where
\begin{align}
\gamma_1=-24n (n-1)\,, \quad \gamma_2&=-3 (n-2) \gamma_1 \,, \quad \lambda^{(1)}_l=\frac{(-3)^{l+2} (l+1)(3l+4) n!}{2 (l+2)!(n-l-2)!}\,, \\
\lambda_l^{(2)}&=-\frac{48 (n-l-2)(n-l-3)}{(l+1)(3l+4)}\,.
\label{eq:coefsgqtg4}
\end{align}
\end{theorem}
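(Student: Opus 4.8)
The plan is to split the statement into its two assertions --- uniqueness of the equivalence class, and the explicit covariant representative --- and dispatch each using the machinery already assembled. For uniqueness, I would first recall that, for each fixed dimension (including $D=4$), the space of on-shell Lagrangians solving the GQTG condition \eqref{eq:gqtgdef} at order $n$ is spanned by the expressions \eqref{eq:zos} and \eqref{eq:sos}, so the on-shell form of any order-$n$ GQTG is a combination $\mathcal{F}_n=\alpha_n\mathcal{Z}_{(n)}|_f+\sum_{j=2}^{n-1}\beta_{n,j}\mathcal{S}_{(n,j)}|_f$ as in \eqref{eq:eqFons}. A covariant density is a genuine four-dimensional GQTG only if its on-shell evaluation lifts to an actual curvature invariant, which by Proposition \ref{prop:noricodd4} forbids every monomial $\Rho^m\Omega^p\Theta^{2k+1}$. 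Proposition \ref{prop:onshellfos} then forces $\alpha_n=\beta_{n,j}=0$ for $j=2,\dots,n-2$, leaving the single one-parameter family $\mathcal{F}_n\propto\mathcal{S}_{(n,n-1)}|_f$. Since by Definition \ref{def:gqtgeq} inequivalent GQTGs correspond to linearly independent equations of motion for $f$ --- equivalently, to on-shell Lagrangians differing by more than a trivial (on-shell-vanishing) density --- this one-dimensionality is precisely the claim that a unique equivalence class exists, with the nonzero $\mathcal{S}_{(n,n-1)}$ as a nontrivial representative.

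For the explicit representative I would take $\mathcal{F}_n=\mathcal{S}_{(n,n-1)}|_f$, obtained by setting $j=n-1$ and $D=4$ in \eqref{eq:gqgos} and simplifying to the form \eqref{eq:gqtgd4os}, namely $\frac{3}{12^n}(\Rho-3\Omega)^{n-2}(-2\Rho^2-6(n-2)\Rho\Omega+3(n-1)(16n\Theta^2+3(2-3n)\Omega^2))$. The key structural observation is that $\Theta$ enters only through the overall $\Theta^2$ factor, consistent with liftability. Next I would expand $(\Rho-3\Omega)^{n-2}$ by the binomial theorem and sort the resulting monomials into the four channels of the $D=4$ dictionary \eqref{eq:dic1d4}--\eqref{eq:dic4d4}: the $\Theta^0$ terms $\Rho^{n-l-2}\Omega^{l+2}$ map to $R^{n-l-2}\I_l^{(1)}$; the $\Theta^2$ terms split by the accompanying power of $\Omega$, with $\Rho^{n-2}\Theta^2\to R^{n-2}\sZ_2$, $\Rho^{n-3}\Omega\Theta^2\to R^{n-3}\sY_3$, and $\Rho^{n-l-4}\Omega^{l+2}\Theta^2\to R^{n-l-4}\I_l^{(1)}\sZ_2$ for $l\geq0$. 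Finally I would normalize the coefficient of $R^n$ to one (equivalently, divide out the prefactor coming from the $-2\Rho^n$ term) and read off $\gamma_1,\gamma_2,\lambda^{(1)}_l,\lambda^{(2)}_l$ by matching coefficients, recovering \eqref{eq:coefsgqtg4}.

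The main obstacle is this last matching step. The coefficient of each $\Rho^{n-l-2}\Omega^{l+2}$ monomial collects contributions from three consecutive binomial terms of $(\Rho-3\Omega)^{n-2}$, paired respectively with $-2\Rho^2$, $-6(n-2)\Rho\Omega$ and $-9(n-1)(3n-2)\Omega^2$, and one must check that this three-term sum collapses to the compact closed form $\lambda^{(1)}_l=\frac{(-3)^{l+2}(l+1)(3l+4)\,n!}{2(l+2)!(n-l-2)!}$. Simultaneously, the single binomial term feeding $\Rho^{n-l-4}\Omega^{l+2}\Theta^2$ must reproduce the product $\lambda^{(1)}_l\lambda^{(2)}_l$ with the \emph{same} $\lambda^{(1)}_l$, which is the nontrivial consistency check fixing $\lambda^{(2)}_l=-\frac{48(n-l-2)(n-l-3)}{(l+1)(3l+4)}$ (and explains why it vanishes for $l=n-2,n-3$). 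I expect both to reduce, after extracting common ratios of factorials, to elementary polynomial identities in $l$ and $n$, so the difficulty is organizational bookkeeping rather than a conceptual gap.
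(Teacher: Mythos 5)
Your proposal is correct and follows essentially the same route as the paper: uniqueness via Proposition \ref{prop:noricodd4} and Proposition \ref{prop:onshellfos} (forcing $\alpha_n=\beta_{n,j}=0$ for $j\leq n-2$), followed by applying the $D=4$ dictionary \eqref{eq:dic1d4}--\eqref{eq:dic4d4} to the on-shell expression \eqref{eq:gqtgd4os} and normalizing the $R^n$ coefficient. The only difference is that you spell out the binomial-expansion bookkeeping (the three-term collapse giving $\lambda^{(1)}_l$ and the consistency check fixing $\lambda^{(2)}_l$) that the paper compresses into ``direct application of the dictionary,'' and that bookkeeping does indeed close as you anticipate.
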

Theorem \ref{theo:d4gqtgs} and the argumentation above provide, to the best of our knowledge, the first rigorous proof of the fact that there exists one and only one inequivalent four-dimensional GQTG at each curvature order $n$. Theorem \ref{theo:d4gqtgs} also presents the explicit off-shell expression of a representative of the unique equivalence class existing at every $n$, which we have proven to coincide with the four-dimensional GQTGs found in \cite{Bueno:2019ycr} up to trivial GQTGs. Also, we have checked that \eqref{eq:gqtg4} satisfies the recurrence relation \eqref{eq:srecu} up to trivial GQTG densities.
Eq. \eqref{eq:gqtg4} provides the following four-dimensional GQTGs at orders $n=3, 4, 5$ and $6$:
\begin{equation}
\mathcal{S}^{(4)}_{(3)}=R^3+18 R W_{a b c d} W^{a b c d}-36 R Z^a_b Z_a^b-126 W\indices{_a_b^c^d}W\indices{_c_d^e^f}W\indices{_e_f^a^b}+108 Z^a_b Z^c_d W\indices{_a_c^b^d}\, ,
\end{equation} 
\begin{align}
\mathcal{S}^{(4)}_{(4)}&= R^4+36 R^2 W_{a b c d} W^{a b c d}-72 R^2 Z^a_b Z_a^b-504 R W\indices{_a_b^c^d}W\indices{_c_d^e^f}W\indices{_e_f^a^b}+432 R Z^a_b Z^c_d W\indices{_a_c^b^d}\notag\\
&+135 \left(W_{a b c d} W^{a b c d}\right)^2-216 W_{a b c d} W^{a b c d}Z^e_f Z_e^f\, ,
\end{align} 
\begin{align}
\mathcal{S}^{(4)}_{(5)}&=R^5+60 R^3 W_{a b c d} W^{a b c d}-120 R^3 Z^a_b Z_a^b-1260 R^2 W\indices{_a_b^c^d}W\indices{_c_d^e^f}W\indices{_e_f^a^b}\notag\\
&+1080 R^2 Z^a_b Z^c_d W\indices{_a_c^b^d}+675 R \left(W_{a b c d} W^{a b c d}\right)^2-1080 R W_{a b c d} W^{a b c d}Z^e_f Z_e^f\notag\\
&-1404 W_{a b c d} W^{a b c d}W\indices{_e_f^g^h}W\indices{_g_h^i^j}W\indices{_i_j^e^f}+2160 Z^a_b Z_a^bW\indices{_c_d^e^f}W\indices{_e_f^g^h}W\indices{_g_h^c^d}\, ,
\end{align} 
\begin{align}
\mathcal{S}^{(4)}_{(6)}&=R^6+90 R^4 W_{a b c d} W^{a b c d}-180 R^4 Z^a_b Z_a^b-2520 R^3 W\indices{_a_b^c^d}W\indices{_c_d^e^f}W\indices{_e_f^a^b}\notag\\
&+2160 R^3 Z^a_b Z^c_d W\indices{_a_c^b^d}+2025 R^2 \left(W_{a b c d} W^{a b c d}\right)^2-3240 R^2 W_{a b c d} W^{a b c d}Z^e_f Z_e^f\notag\\
&-8424 R W_{a b c d} W^{a b c d}W\indices{_e_f^g^h}W\indices{_g_h^i^j}W\indices{_i_j^e^f}+12960 R Z^a_b Z_a^bW\indices{_c_d^e^f}W\indices{_e_f^g^h}W\indices{_g_h^c^d}\notag\\
&+1080 \left(W_{a b c d} W^{a b c d}\right)^3-1620 \left(W_{a b c d} W^{a b c d}\right)^2Z^e_f Z_e^f\, ,
\end{align}
The cubic density $\mathcal{S}_{(3)}^{(4)}$ is related to Einsteinian Cubic Gravity \eqref{eq:ECG} via
\begin{equation}
\mathcal{S}_{(3)}^{(4)}=18(\mathcal{T}-\mathcal{P})\, ,
\end{equation}
where $\mathcal{T}$ is a trivial GQTG in $D=4$ which has the following expression:
\begin{align}
\mathcal{T}&=12R_{a\ b}^{\ c \ d}R_{c\ d}^{\ e \ f}R_{e\ f}^{\ a \ b} - 6 R_{ab}{}^{cd}R_{cd}{}^{ef}R_{ef}{}^{ab} + 42 R_{abcd} R^{abc}_{\ \ \ e} R^{de} - 6 R R_{abcd}R^{abcd}\notag\\&
 - 48 R^{ab}R^{cd}R_{abcd} - 28 R^a_bR^b_cR^c_d + 24 R R^{ab}R_ {ab} - 2 R^3\, .
\end{align}
On the other hand, regarding the number of curvature invariants $\#_n^{\text{GQTG}, (4)}$ appearing in \eqref{eq:gqtg4}, if one takes into account that there are five curvature invariants at $n=3$ and that the sum generates two new terms at each order $n$, it is immediate to deduce that
\begin{equation}
\#_n^{\text{GQTG}, (4)}=2n-1\, .
\end{equation}
As mentioned before, this is precisely the number of terms that need to be included in the specific GQTGs found in \cite{Bueno:2019ycr}.


\section{Conclusions}\label{sec:concl}

In this paper we have carried out the classification of all inequivalent GQTGs in dimension $D \geq 4$. This characterization has been possible thanks to the dictionary derived in Section \ref{sec:offon} for $D \geq 5$ and in Section \ref{sec:gqtgs4} for $D=4$, which allowed the conversion of the on-shell classification of GQTGs developed in \cite{Bueno:2022res} into a fully covariant one. On the one hand, we were able to provide the covariant form of a simple representative of the unique equivalence class of QTGs --- see Theorem \ref{theo:qtgs} --- and of the $n-2$ equivalence classes of GQTGs --- see Theorem \ref{theo:gqtgs} --- existing in $D \geq 5$ at every curvature order $n$. Until now, only somewhat cumbersome expressions for QTGs and for a single example of GQTG at each curvature order were known \cite{Bueno:2019ycr}. On the other hand, we managed to rigorously prove that there exists one and only one equivalence class of (proper) GQTGs in four dimensions and to obtain the off-shell expression of a representative of such unique class in Theorem \ref{theo:d4gqtgs}. This is the first explicit proof regarding the existence of a unique class of inequivalent GQTGs in $D=4$. Also, the four-dimensional GQTGs to arbitrary order in the curvature presented in Eq. \eqref{eq:gqtg4} are much simpler in form than those provided before in the literature, cf. \cite{Bueno:2019ycr}.


Regarding future directions, there are several possibilities worth exploring. First, it would be interesting to have a better understanding on the role played by trivial GQTGs, such as parity-breaking densities. Indeed, note that we have carried out the classification of inequivalent GQTGs, which precisely mods out trivial GQTGs, so a full characterization of trivial GQTGs order by order would be necessary in order to obtain all GQTGs. This aspect is intimately connected with the study of non-SSS solutions in GQTGs, since trivial GQTGs need not be vanishing on different ans\"atze and, as such, may affect properties of solutions. In particular, it would be useful to investigate the properties of cosmological solutions and rotating black-hole solutions in GQTGs, examining the effects of adding or removing trivial GQTGs. This program has already been initiated for particular GQTGs \cite{Arciniega:2018fxj,Arciniega:2018tnn,Cisterna:2018tgx,Cano:2019ozf,Adair:2020vso,Cano:2020oaa}, so it would be of interest to extend it in more generality. Also, taking into account that GQTGs span the set of effective gravitational theories \cite{Bueno:2019ltp} after considering the possibility of metric redefinitions, the present classification, together with a potential complete characterization of trivial GQTGs, could help to unveil the minimum amount of GQTGs to obtain a basis for the set of effective theories of gravity. 

On the other hand, the methods developed in this work could be of utility in other setups. For example, one could consider the set of higher-curvature gravities which admit a holographic $c$-theorem \cite{Sinha:2010ai,Paulos:2010ke}. These theories are characterized by admitting domain-wall type solutions with second-order equations of motion which interpolate between two asymptotically AdS regions representing the ultraviolet and infrared fixed points of the renormalization group flow \cite{Girardello:1998pd,Freedman:1999gp}. Having at disposal the form of such theories when evaluated on the domain-wall like ansatz \cite{Bueno:2022log}, one could wonder whether it is possible to apply similar techniques to the ones presented in our work to obtain the covariant expression of theories satisfying holographic $c$-theorems at arbitrary orders. 

In more generality, it is natural to ask as well which new avenues our classification of inequivalent GQTGs may open up within the holographic context.







\section*{Acknowledgments}

We thank Pablo Bueno, Pablo A. Cano and Robie Hennigar for useful discussions. The research of J. M. is partially supported by Israel Science Foundation, grant no. 1487/21. The work of \'A. J. M. has been supported by a postdoctoral grant from the Istituto Nazione di Fisica Nucleare, Bando 23590. \'A. J. M. also wishes to thank E. Gil and A. Murcia for their permanent support.

\appendix 

\section{QTG densities at $n=5$ and $n=6$}\label{sec:App1}

We present here the QTGs $\mathcal{Z}_{(n)}$ of curvature order $n=5$ and $n=6$ that are obtained via Theorem \ref{theo:qtgs} for arbitrary dimension $D \geq 5$. We write them in terms of the Weyl tensor $W_{abcd}$, traceless Ricci tensor $Z_{ab}$ and the Ricci scalar $R$: 

\begin{align}
\Z_{(5)}&=R^5+\frac{4 (D-1)^3 D^4 (4 D-5) W_{a b c d} W^{a b c d}W\indices{_e_f^g^h}W\indices{_g_h^i^j}W\indices{_i_j^e^f}}{(D-3)^2 (D-2)^3 (D ((D-9)
   D+26)-22)}\notag\\
   &-\frac{240 (D-1)^3 D^4 W_{a b c d} W^{a b c d}Z^e_f W_{e g h i}W^{fghi}}{(D-4) (D-3)^2 (D-2)^4}-\frac{40 (D-1) D R^3 Z^a_b Z_a^b}{(D-2)^2}\notag\\
   &-\frac{1920 (D-1)^3 D^3 R
   Z^{a}_b Z_{a c} Z_{d e} W^{bdce}}{(D-4) (D-3) (D-2)^4}+\frac{15
   (D-1)^2 D^3 (3 D-4) R \left(W_{a b c d} W^{a b c d}\right)^2}{(D-3)^2 (D-2)^4}\notag\\
   &-\frac{160 (D-1)^3 D^3 (D (11 D-12)+4)
   W_{a b c d} W^{a b c d}Z^e_f Z^f_gZ_e^g}{(D-4) (D-3) (D-2)^6}\notag\\
   &+\frac{40 (D-1)^3 D^3 (D (17
   D-28)+12) Z^a_b Z_a^bW\indices{_c_d^e^f}W\indices{_e_f^g^h}W\indices{_g_h^c^d}}{(D-3) (D-2)^4 (D ((D-9) D+26)-22)}\notag\\
   &+\frac{1920
   (D-1)^4 D^3 Z^a_b Z_a^bZ^c_d Z^e_f W\indices{_c_e^d^f}}{(D-3) (D-2)^6}-\frac{256 (D-1)^4 D^3 (D+4)
   Z^a_b Z_a^bZ^c_d Z^d_eZ_c^e}{(D-4) (D-2)^7}\notag\\
   &+\frac{20 (D-1)^2 D^2 (2 D-3) R^2 W\indices{_a_b^c^d}W\indices{_c_d^e^f}W\indices{_e_f^a^b}}{(D-3)
   (D-2) (D ((D-9) D+26)-22)}+\frac{240 (D-1)^2 D^2 R^2 Z^a_b Z^c_d W\indices{_a_c^b^d}}{(D-3)
   (D-2)^3}\notag\\
   &+\frac{160 (D-1)^2 D^2 R^2 Z^a_b Z^b_cZ_a^c}{(D-2)^4}-\frac{240 (D-1)^2 D^2 R^2
   Z^a_b W_{a c d e}W^{bcde}}{(D-4) (D-3) (D-2)^2}\notag\\
   &+\frac{120 (D-1)^2 D^2 (D (7 D-10)+4) R
   W_{a b c d} W^{a b c d}Z^e_f Z_e^f}{(D-3) (D-2)^5}+\frac{960 (D-1)^3 D^2 R \left(Z^a_b Z_a^b\right)^2}{(D-2)^6}\notag\\
   &-\frac{960
   (D-1)^3 D^2 R Z^{ab} W_{acbd} W^{c efg} W^d{}_{efg}}{(D-4) (D-3)^2 (D-2)^2}+\frac{10 (D-1) D R^3
   W_{a b c d} W^{a b c d}}{(D-3) (D-2)}\, .
\end{align}

\begin{align}
\Z_{(6)}&=R^6 +\frac{5 (D-1)^3 D^5 (5 D-6) \left(W_{a b c d} W^{a b c d}\right)^3}{(D-3)^3 (D-2)^6}\notag\\
   &+\frac{24 (D-1)^3 D^4 (4
   D-5) R W_{a b c d} W^{a b c d}W\indices{_e_f^g^h}W\indices{_g_h^i^j}W\indices{_i_j^e^f}}{(D-3)^2 (D-2)^3 (D ((D-9) D+26)-22)}\notag\\
   &-\frac{1440
   (D-1)^3 D^4 R W_{a b c d} W^{a b c d}Z^e_f W_{e g h i}W^{fghi}}{(D-4) (D-3)^2 (D-2)^4}\notag\\
   &+\frac{60 (D-1)^3
   D^4 (D (31 D-54)+24) \left(W_{a b c d} W^{a b c d}\right)^2Z^e_f Z_e^f}{(D-3)^2 (D-2)^7}\notag\\
   &-\frac{960 (D-1)^4
   D^4 W_{a b c d} W^{a b c d}Z^{ef} W_{egfh} W^{g ijk} W^h{}_{ijk}}{(D-4) (D-3)^3 (D-2)^4}\notag\\
   &-\frac{320 (D-1)^4 D^4 (D (23
   D-32)+12) W\indices{_a_b^c^d}W\indices{_c_d^e^f}W\indices{_e_f^a^b}Z^g_h Z^h_iZ_g^i}{(D-4) (D-3) (D-2)^5 (D ((D-9)
   D+26)-22)}\notag\\
   &-\frac{15360 (D-1)^5 D^4 Z^a_b Z_a^bZ^{c}_d Z_{c e} Z_{f g} W^{dfeg}}{(D-4) (D-3)
   (D-2)^7}-\frac{5760 (D-1)^3 D^3 R^2 Z^{a}_b Z_{a c} Z_{d e} W^{bdce}}{(D-4) (D-3) (D-2)^4}\notag\\
   &-\frac{960
   (D-1)^3 D^3 (D (11 D-12)+4) R W_{a b c d} W^{a b c d}Z^e_f Z^f_gZ_e^g}{(D-4) (D-3)
   (D-2)^6}\notag\\
   &+\frac{240 (D-1)^3 D^3 (D (17 D-28)+12) R Z_{ab} Z^{ab} W_{cdef} W^{efgh} W_{gh}{}^{cd}}{(D-3)
   (D-2)^4 (D ((D-9) D+26)-22)}\notag\\
   &+\frac{11520 (D-1)^4 D^3 R Z^a_b Z_a^bZ^c_d Z^e_f W\indices{_c_e^d^f}}{(D-3) (D-2)^6}-\frac{1536 (D-1)^4 D^3 (D+4) R Z^a_b Z_a^bZ^c_d Z^d_eZ_c^e}{(D-4)
   (D-2)^7}\notag\\
   &+\frac{960 (D-1)^4 D^3 (D (8 D-7)+2) W_{a b c d} W^{a b c d}\left(Z^e_f Z_e^f\right)^2}{(D-3)
   (D-2)^8}\notag\\
   &+\frac{1280 (D-1)^5 D^3 (D+2) \left(Z^a_b Z_a^b\right)^3}{(D-2)^9}+\frac{40 (D-1)^2
   D^2 (2 D-3) R^3 W\indices{_a_b^c^d}W\indices{_c_d^e^f}W\indices{_e_f^a^b}}{(D-3) (D-2) (D ((D-9) D+26)-22)}\notag\\
   &+\frac{480
   (D-1)^2 D^2 R^3 Z^a_b Z^c_d W\indices{_a_c^b^d}}{(D-3) (D-2)^3}+\frac{320 (D-1)^2 D^2 R^3
   Z^a_b Z^b_cZ_a^c}{(D-2)^4}\notag\\
   &-\frac{480 (D-1)^2 D^2 R^3 Z^a_b W_{a c d e}W^{bcde}}{(D-4) (D-3)
   (D-2)^2}\notag\\
   &+\frac{360 (D-1)^2 D^2 (D (7 D-10)+4) R^2 W_{a b c d} W^{a b c d}Z^e_f Z_e^f}{(D-3)
   (D-2)^5}\notag\\
   &-\frac{2880 (D-1)^3 D^2 R^2
   Z^{ab} W_{acbd} W^{c efg} W^d{}_{efg}}{(D-4) (D-3)^2 (D-2)^2}+\frac{2880 (D-1)^3 D^2 R^2 \left(Z^a_b Z_a^b\right)^2}{(D-2)^6}\notag\\
   &+\frac{45 D^3 (3 D-4) \left(D-1\right)^2 R^2 \left(W_{a b c d} W^{a b c d}\right)^2}{(D-3)^2 (D-2)^4}+\frac{15 (D-1) D R^4 W_{a b c d} W^{a b c d}}{(D-3)
   (D-2)}\notag\\
   &-\frac{60 (D-1) D R^4 Z^a_b Z_a^b}{(D-2)^2}\, .
   \end{align}

\section{GQTG densities at $n=5$ and $n=6$}\label{sec:App2}

Here, we present the GQTG densities $\mathcal{S}_{(n,j)}$ of Theorem \ref{theo:gqtgs} at fifth and sixth order for $D\geq5$ in terms of the Weyl tensor $W_{abcd}$, traceless Ricci tensor $Z_{ab}$ and the Ricci scalar $R$. At each order, there are $j=2,\ldots, n-1$ inequivalent densities --- this is, three at $n=5$ and four at $n=6$:

\begin{align}  
   \mathcal{S}_{(5,2)}&=R^5-\frac{(D-1)^3 D^4 (D (5 (D-3) D-33)+55) W_{a b c d} W^{a b c d}W\indices{_e_f^g^h}W\indices{_g_h^i^j}W\indices{_i_j^e^f}}{3 (D-3)^2 (D-2)^3 (D ((D-9)
   D+26)-22)}\notag\\
   &+\frac{5 (D-1)^3 D^4 (D (23 D-32)-276) W_{a b c d} W^{a b c d}Z^e_f W_{e g h i}W^{fghi}}{6 (D-4) (D-3)^2
   (D-2)^4}\notag\\
   &-\frac{5 (D-1)^2 D^3 (D ((D-3) D-15)+23) R \left(W_{a b c d} W^{a b c d}\right)^2}{2 (D-3)^2 (D-2)^4}\notag\\
   &+\frac{20
   (D-1)^3 D^3 (D (D+4)-100) R Z^{a}_b Z_{a c} Z_{d e} W^{bdce}}{(D-4) (D-3) (D-2)^4}\notag\\
   &+\frac{5 (D-1)^3 D^3
   (D (D (D (43 D-52)-1044)+1152)-384) W_{a b c d} W^{a b c d}Z^e_f Z^f_gZ_e^g}{3 (D-4) (D-3) (D-2)^6}\notag\\
   &-\frac{5 (D-1)^3
   D^3 (D (D (D (24 D-61)-351)+648)-284) W\indices{_a_b^c^d}W\indices{_c_d^e^f}W\indices{_e_f^a^b} Z^a_b Z_a^b}{3 (D-3) (D-2)^4 (D ((D-9)
   D+26)-22)}\notag\\
   &-\frac{20 (D-1)^4 D^3 (D (7 D+12)-300) Z^a_b Z_a^bZ^c_d Z^e_f W\indices{_c_e^d^f}}{3 (D-3) (D-2)^6}\notag\\
   &+\frac{32
   (D-1)^4 D^3 (D (5 D-24)-96) Z^a_b Z_a^bZ^c_d Z^d_eZ_c^e}{3 (D-4) (D-2)^7}\notag\\
   &-\frac{5 (D-1)^2 D^2 (D
   ((D-3) D-45)+71) R^2 W\indices{_a_b^c^d}W\indices{_c_d^e^f}W\indices{_e_f^a^b}}{6 (D-3) (D-2) (D ((D-9) D+26)-22)}\notag\\
   &-\frac{10 (D-25) (D-1)^2
   D^2 R^2 Z^a_b Z^c_d W\indices{_a_c^b^d}}{(D-3) (D-2)^3}-\frac{5 (D-1)^2 D^2 \left(D^2-96\right) R^2 Z^a_b Z^b_cZ_a^c}{3
   (D-2)^4}\notag\\
   &-\frac{10
   (D-1)^2 D^2 (D (D (D (2 D-3)-83)+120)-48) R W_{a b c d} W^{a b c d}Z^e_f Z_e^f}{(D-3) (D-2)^5}\notag\\
   &+\frac{10
   (D-1)^3 D^2 \left((D-8) D^2+288\right) R \left(Z^a_b Z_a^b\right)^2}{3 (D-2)^6}-\frac{40 (D-1) D R^3 Z^a_b Z_a^b}{(D-2)^2}\notag\\
   &+\frac{5 (D-1)^2 D^2 \left(D^2-96\right) R^2 Z^a_b W_{a c d e}W^{bcde}}{2 (D-4) (D-3) (D-2)^2}+\frac{10 (D-1) D R^3 W_{a b c d} W^{a b c d}}{(D-3)
   (D-2)}\notag\\
   &+\frac{10 (D-1)^3 D^2 (D (11
   D-12)-284) R Z^{ab} W_{acbd} W^{c efg} W^d{}_{efg}}{3 (D-4) \left(D-3\right)^2 \left (D-2\right)^2}\, .
   \end{align}
   
\begin{align}   
   \mathcal{S}_{(5,3)}&=R^5+\frac{(D-1)^3 D^4 (D (5 D ((D-8) D+18)-16)-55) W_{a b c d} W^{a b c d}W\indices{_e_f^g^h}W\indices{_g_h^i^j}W\indices{_i_j^e^f}}{4 (D-3)^2 (D-2)^3 (D
   ((D-9) D+26)-22)}\notag\\
   &-\frac{5 (D-1)^3 D^4 (D (D (19 D-132)+155)+330) W_{a b c d} W^{a b c d}Z^e_f W_{e g h i}W^{fghi}}{8 (D-4)
   (D-3)^2 (D-2)^4}\notag\\
   &+\frac{15 (D-1)^2 D^3 (D (D ((D-12) D+30)+20)-55) R \left(W_{a b c d} W^{a b c d}\right)^2}{16
   (D-3)^2 (D-2)^4}\notag\\
   &+\frac{30 (D-1)^3 D^3 (D (D+9)-72) R Z^{a}_b Z_{a c} Z_{d e} W^{bdce}}{(D-4) (D-3)
   (D-2)^4}\notag\\
   &-\frac{5 (D-1)^3 D^3 (D (D (D (D (11 D-96)+97)+686)-768)+256) W_{a b c d} W^{a b c d}Z^e_f Z^f_gZ_e^g}{2
   (D-4) (D-3) (D-2)^6}\notag\\
   &+\frac{5  (D (D (D (D (8 D-71)+146)+147)-398)+184)
   W\indices{_a_b^c^d}W\indices{_c_d^e^f}W\indices{_e_f^a^b} Z^a_b Z_a^b}{2 (D-1)^{-3} D^{-3} (D-3) (D-2)^4 (D ((D-9) D+26)-22)}\notag\\
   &+\frac{10 (D-1)^4 D^3 (D ((D-8)
   D-30)+216) Z^a_b Z_a^bZ^c_d Z^e_f W\indices{_c_e^d^f}}{(D-3) (D-2)^6}\notag\\
   &+\frac{(D-1)^4 D^3 (D (5 D ((D-10)
   D+64)-512)-2048) Z^a_b Z_a^bZ^c_d Z^d_eZ_c^e}{2 (D-4) (D-2)^7}\notag\\
   &-\frac{5 (D-1)^2 D^2 (D ((D-3) D-13)+23) R^2
   W\indices{_a_b^c^d}W\indices{_c_d^e^f}W\indices{_e_f^a^b}}{2 (D-3) (D-2) (D ((D-9) D+26)-22)}\notag\\
   &-\frac{30 (D-9) (D-1)^2 D^2 R^2
   Z^a_b Z^c_d W\indices{_a_c^b^d}}{(D-3) (D-2)^3}-\frac{5 (D-1)^2 D^2 \left(D^2-32\right) R^2 Z^a_b Z^b_cZ_a^c}{(D-2)^4}\notag\\
   &+\frac{15
   (D-1)^2 D^2 \left(D^2-32\right) R^2 Z^a_b W_{a c d e}W^{bcde}}{2 (D-4) (D-3) (D-2)^2}\notag\\
   &+\frac{15 (D-1)^2 D^2
   (D (D (D ((D-17) D+23)+217)-320)+128) R W_{a b c d} W^{a b c d}Z^e_f Z_e^f}{4 (D-3) (D-2)^5}\notag\\
   &-\frac{5 (D-1)^3
   D^2 \left(D^2 ((D-13) D+64)-768\right) R \left(Z^a_b Z_a^b\right)^2}{4 (D-2)^6}\notag\\
   &-\frac{5 (D-1)^3 D^2 (D (D
   (2 D-25)+25)+184) R Z^{ab} W_{acbd} W^{c efg} W^d{}_{efg}}{(D-4) \left(D-2\right)^2\left(D-3\right)^2}\notag\\
   &+\frac{10 (D-1) D R^3 W_{a b c d} W^{a b c d}}{(D-3)
   (D-2)}-\frac{40 (D-1) D R^3 Z^a_b Z_a^b}{(D-2)^2}\, .
   \end{align}
   
\begin{align}   
   \mathcal{S}_{(5,4)}&=R^5-\frac{ (D (D (D (D (3 D-35)+150)-270)+151)+17) W_{a b c d} W^{a b c d}W\indices{_e_f^g^h}W\indices{_g_h^i^j}W\indices{_i_j^e^f}}{4 (D-1)^{-3} D^{-4}(D-3)^2
   (D-2)^3 (D ((D-9) D+26)-22)}\notag\\
   &+\frac{5 (D-1)^3 D^4 (D (D (9 D-64)+109)+66) W_{a b c d} W^{a b c d}
   Z^a_b W_{a c d e}W^{bcde}}{8 (D-3)^2 (D-2)^4}\notag\\
   &+\frac{15 (D-1)^2 D^3 (D (D ((D-8) D+18)-4)-11) R \left(W_{a b c d} W^{a b c d}\right)^2}{4
   (D-3)^2 (D-2)^4}\notag\\
   &+\frac{600 (D-1)^3 D^3 R Z^{a}_b Z_{a c} Z_{d e} W^{bdce}}{(D-3) (D-2)^4}-\frac{60 (D-5) (D-1)^2 D^2 R^2 Z^a_b Z^c_d W\indices{_a_c^b^d}}{(D-3) (D-2)^3}\notag\\
   &+\frac{5 (D-1)^3 D^3 (D
   (D (D (D (3 D-32)+75)+126)-176)+64) W_{a b c d} W^{a b c d}Z^e_f Z^f_gZ_e^g}{2 (D-3) (D-2)^6}\notag\\
   &-\frac{5 
   \left((D (D (3 D-41)+197)-343) D^3+344 D-176\right) W\indices{_a_b^c^d}W\indices{_c_d^e^f}W\indices{_e_f^a^b} Z^a_b Z_a^b}{2 (D-1)^{-3} D^{-3} (D-3) (D-2)^4 (D
   ((D-9) D+26)-22)}\notag\\
   &+\frac{10 (D-4) (D-1)^4 D^3 (D (D+3)-60) Z^a_b Z_a^bZ^c_d Z^e_f W\indices{_c_e^d^f}}{(D-3)
   (D-2)^6}\notag\\
   &-\frac{(D-1)^4 D^3 (D (D (D (3 D-26)+96)-256)-512) Z^a_b Z_a^bZ^c_d Z^d_eZ_c^e}{2 (D-2)^7}\notag\\
   &-\frac{5
   (D-1)^2 D^2 (D ((D-3) D-5)+11) R^2 W\indices{_a_b^c^d}W\indices{_c_d^e^f}W\indices{_e_f^a^b}}{(D-3) (D-2) (D ((D-9)
   D+26)-22)}\notag\\
   &-\frac{10 (D-4) (D-1)^2
   D^2 (D+4) R^2 Z^a_b Z^b_cZ_a^c}{(D-2)^4}+\frac{15 (D-1)^2 D^2 (D+4) R^2 Z^a_b W_{a c d e}W^{bcde}}{(D-3)
   (D-2)^2}\notag\\
   &+\frac{15 (D-1)^2 D^2 (D (D (D ((D-9) D+11)+53)-80)+32) R W_{a b c d} W^{a b c d}
   Z^a_b Z_a^b}{(D-3) (D-2)^5}\notag\\
   &-\frac{5 (D-4) (D-1)^3 D^2 (D ((D-5) D+12)+48) R
   \left(Z^a_b Z_a^b\right)^2}{(D-2)^6}\notag\\
   &-\frac{20 (D-1)^3 D^2 (2 (D-3) D-11) R Z^{ab} W_{acbd} W^{c efg} W^d{}_{efg}}{\left(D-2\right)^2\left(D-3\right)^2}\notag\\
   &+\frac{10
   (D-1) D R^3 W_{a b c d} W^{a b c d}}{(D-3) (D-2)}-\frac{40 (D-1) D R^3 Z^a_b Z_a^b}{(D-2)^2}\, .
   \end{align}
   
\begin{align}  
   \mathcal{S}_{(6,2)}&=R^6+\frac{15 (D-1) D W_{a b c d} W^{a b c d} R^4}{(D-3) (D-2)}-\frac{60 (D-1) D Z^a_b Z_a^b
   R^4}{(D-2)^2}\notag\\
   &-\frac{(D-1)^2 D^2 (D ((D-3) D-77)+119) W\indices{_a_b^c^d}W\indices{_c_d^e^f}W\indices{_e_f^a^b} R^3}{(D-3) (D-2) (D
   ((D-9) D+26)-22)}\notag\\
   &-\frac{12 (D-41) (D-1)^2 D^2 Z^a_b Z^c_d W\indices{_a_c^b^d} R^3}{(D-3) (D-2)^3}-\frac{2 (D-1)^2
   D^2 \left(D^2-160\right) Z^a_b Z^b_cZ_a^c R^3}{(D-2)^4}\notag\\
   &+\frac{3 (D-1)^2 D^2 \left(D^2-160\right) Z^a_b W_{a c d e}W^{bcde}
   R^3}{(D-4) (D-3) (D-2)^2}\notag\\
   &-\frac{9 (D-1)^2 D^3 (D ((D-3) D-27)+39) \left(W_{a b c d} W^{a b c d}\right)^2 R^2}{2
   (D-3)^2 (D-2)^4}\notag\\
   &+\frac{6 (D-1)^3 D^2 \left((D-8) D^2+480\right) \left(Z^a_b Z_a^b\right)^2 R^2}{(D-2)^6}\notag\\
   &+\frac{36
   (D-1)^3 D^3 (D (D+4)-164) Z^{a}_b Z_{a c} Z_{d e} W^{bdce} R^2}{(D-4) (D-3) (D-2)^4}\notag\\
   &-\frac{18 (D-1)^2 D^2
   (D (D (D (2 D-3)-139)+200)-80) W_{a b c d} W^{a b c d}Z^e_f Z_e^f R^2}{(D-3) (D-2)^5}\notag\\
   &+\frac{6 (D-1)^3 D^2
   (D (11 D-12)-476) Z^{ab} W_{acbd} W^{c efg} W^d{}_{efg} R^2}{(D-4) \left(D-2\right)^2 \left(D-3\right)^2}\notag\\
   &-\frac{6 (D-1)^3 D^4 (D
   ((D-3) D-13)+19) W_{a b c d} W^{a b c d}W\indices{_e_f^g^h}W\indices{_g_h^i^j}W\indices{_i_j^e^f} R}{(D-3)^2 (D-2)^3 (D ((D-9) D+26)-22)}\notag\\
   &-\frac{6 (D-1)^3
   D^3 (D (D (D (24 D-61)-623)+1096)-476) W\indices{_a_b^c^d}W\indices{_c_d^e^f}W\indices{_e_f^a^b} Z^g_h Z_g^h R}{(D-3) (D-2)^4 (D ((D-9)
   D+26)-22)}\notag\\
   &-\frac{24 (D-1)^4 D^3 (D (7 D+12)-492) Z^a_b Z_a^bZ^c_d Z^e_f W\indices{_c_e^d^f} R}{(D-3) (D-2)^6}\notag\\
   &+\frac{6
   (D-1)^3 D^3 (D (D (D (43 D-52)-1748)+1920)-640) W_{a b c d} W^{a b c d}Z^e_f Z^f_gZ_e^g R}{(D-4) (D-3)
   (D-2)^6}\notag\\
   &+\frac{192 (D-1)^4 D^3 ((D-8) D-32) Z^a_b Z_a^bZ^c_d Z^d_eZ_c^e R}{(D-4) (D-2)^7}\notag\\
   &+\frac{3 (D-1)^3
   D^4 (D (23 D-32)-468) W_{a b c d} W^{a b c d}Z^e_f W_{e g h i}W^{fghi} R}{(D-4) (D-3)^2 (D-2)^4}\notag\\
   &-\frac{5 (D-1)^3 D^5
   (D ((D-3) D-7)+11) \left(W_{a b c d} W^{a b c d}\right)^3}{2 (D-3)^3 (D-2)^6}\notag\\
   &-\frac{16 (D-1)^5 D^3 (D (D
   (D+8)-80)-160) \left(Z^a_b Z_a^b\right)^3}{(D-2)^9}\notag\\
   &-\frac{6 (D-1)^4 D^3 (D (D (D (47 D-48)-1272)+1120)-320)
   W_{a b c d} W^{a b c d}\left(Z^e_f Z_e^f\right)^2}{(D-3) (D-2)^8}\notag\\
   &-\frac{6 (D-1)^3 D^4 (D (D (D (20 D-59)-245)+508)-234)
   \left(W_{a b c d} W^{a b c d}\right)^2Z^e_f Z_e^f}{(D-3)^2 (D-2)^7}\notag\\
   &+\frac{384 (D-1)^5 D^4 \left(D^2+D-41\right) Z^{a}_b Z_{a c} Z_{d e} W^{bdce}
   Z^f_g Z_f^g}{(D-4) (D-3) (D-2)^7}\notag\\
   &+\frac{2 (D (D (3 D (61
   D-132)-3368)+5008)-1904) W\indices{_a_b^c^d}W\indices{_c_d^e^f}W\indices{_e_f^a^b}Z^g_h Z^h_iZ_g^i}{(D-1)^{-4} D^{-4}(D-4) (D-3) (D-2)^5 (D ((D-9) D+26)-22)}\notag\\
   &+\frac{6
   (D-1)^4 D^4 (D (13 D-20)-152) W_{a b c d} W^{a b c d}Z^{ef} W_{egfh} W^{g ijk} W^h{}_{ijk}}{(D-4) (D-3)^3 (D-2)^4}\, .
   \end{align}
\begin{align}   
   \mathcal{S}_{(6,3)}&=R^6+\frac{15 (D-1) D W_{a b c d} W^{a b c d} R^4}{(D-3) (D-2)}-\frac{60 (D-1) D Z^a_b Z_a^b R^4}{(D-2)^2}\notag\\
   &-\frac{(D-1)^2 D^2
   (D (3 (D-3) D-71)+117) W\indices{_a_b^c^d}W\indices{_c_d^e^f}W\indices{_e_f^a^b} R^3}{(D-3) (D-2) (D ((D-9) D+26)-22)}\notag\\
   &-\frac{12 (D-1)^2 D^2 (3
   D-43) Z^a_b Z^c_d W\indices{_a_c^b^d} R^3}{(D-3) (D-2)^3}-\frac{2 (D-1)^2 D^2 \left(3 D^2-160\right) Z^a_b Z^b_cZ_a^c R^3}{(D-2)^4}\notag\\
   &+\frac{3
   (D-1)^2 D^2 \left(3 D^2-160\right) Z^a_b W_{a c d e}W^{bcde} R^3}{(D-4) (D-3) (D-2)^2}\notag\\
   &+\frac{9 (D-1)^2 D^3 (D (D
   ((D-16) D+42)+80)-147) \left(W_{a b c d} W^{a b c d}\right)^2 R^2}{8 (D-3)^2 (D-2)^4}\notag\\
   &-\frac{3 (D-1)^3 D^2 \left(D^2 ((D-17)
   D+96)-1920\right) \left(Z^a_b Z_a^b\right)^2 R^2}{2 (D-2)^6}\notag\\
   &+\frac{36 (D-1)^3 D^3 (D (2 D+13)-172) Z^{a}_b Z_{a c} Z_{d e} W^{bdce} R^2}{(D-4) (D-3)
   (D-2)^4}\notag\\
   &+\frac{9 (D-1)^2 D^2 (D (D (D ((D-25) D+35)+549)-800)+320) W_{a b c d} W^{a b c d}Z^e_f Z_e^f R^2}{2 (D-3)
   (D-2)^5}\notag\\
   &-\frac{6 (D-1)^3 D^2 (D (2 (D-18) D+37)+468) Z^{ab} W_{acbd} W^{c efg} W^d{}_{efg} R^2}{(D-4) \left(D-2\right)^2 \left(D-3\right)^2}\notag\\
   &+\frac{3 (D-1)^3 D^4 (D ((D-6) (D-4) D+10)-33) W_{a b c d} W^{a b c d}W\indices{_e_f^g^h}W\indices{_g_h^i^j}W\indices{_i_j^e^f} R}{(D-3)^2 (D-2)^3
   (D ((D-9) D+26)-22)}\notag\\
   &+\frac{6  (D (D ((D-9) D (8 D-23)+498)-1046)+468) W\indices{_a_b^c^d}W\indices{_c_d^e^f}W\indices{_e_f^a^b}
   Z^g_h Z_g^h R}{(D-1)^{-3} D^{-3}(D-3) (D-2)^4 (D ((D-9) D+26)-22)}\notag\\
   &+\frac{24 (D-1)^4 D^3 (D ((D-15) D-42)+516)
   Z^a_b Z_a^bZ^c_d Z^e_f W\indices{_c_e^d^f} R}{(D-3) (D-2)^6}\notag\\
   &-\frac{6  (D (D (D (D (11 D-139)+149)+1730)-1920)+640)
   W_{a b c d} W^{a b c d}Z^e_f Z^f_gZ_e^g R}{(D-1)^{-3} D^{-3}(D-4) (D-3) (D-2)^6}\notag\\
   &+\frac{6 (D-1)^4 D^3 (D (D ((D-10) D+96)-256)-1024) Z^a_b Z_a^b
   Z^c_d Z^d_eZ_c^e R}{(D-4) (D-2)^7}\notag\\
   &-\frac{3 (D-1)^3 D^4 (D (D (19 D-178)+219)+882) W_{a b c d} W^{a b c d}Z^e_f W_{e g h i}W^{fghi} R}{2 (D-4)
   (D-3)^2 (D-2)^4}\notag\\
   &+\frac{5 (D-1)^3 D^5 (D (3 D ((D-8) D+18)-8)-33) \left(W_{a b c d} W^{a b c d}\right)^3}{8 (D-3)^3
   (D-2)^6}\notag\\
   &+\frac{4 (D-1)^5 D^3 (D (3 (D-32) D+320)+640) \left(Z^a_b Z_a^b\right)^3}{(D-2)^9}\notag\\
   &+\frac{3  (D
   (D (D (D (73 D-625)+544)+5048)-4480)+1280) W_{a b c d} W^{a b c d}\left(Z^e_f Z_e^f\right)^2}{2 (D-1)^{-4} D^{-3} (D-3) (D-2)^8}\notag\\
   &+\frac{3  (D
   (D (3 D (D (15 D-133)+309)+313)-1728)+882) \left(W_{a b c d} W^{a b c d}\right)^2Z^e_f Z_e^f}{2 (D-1)^{-3} D^{-4} (D-3)^2 (D-2)^7}\notag\\
   &-\frac{3 (D-1)^4 D^4 (D (D (17 D-118)+151)+264) W_{a b c d} W^{a b c d}Z^{ef} W_{egfh} W^{g ijk} W^h{}_{ijk}}{(D-4) (D-3)^3 (D-2)^4}\notag\\
   &-\frac{2(D (D (3
   D (D (29 D-254)+458)+2672)-4772)+1872) W\indices{_a_b^c^d}W\indices{_c_d^e^f}W\indices{_e_f^a^b}Z^g_h Z^h_iZ_g^i}{(D-1)^{-4} D^{-4}(D-4) (D-3) (D-2)^5 (D ((D-9)
   D+26)-22)}\notag\\
   &-\frac{12 (D-1)^5 D^4 (3
   D (3 (D-8) D-40)+1376) Z^{a}_b Z_{a c} Z_{d e} W^{bdce} Z^f_g Z_f^g}{(D-4) (D-3) (D-2)^7}\, .
\end{align}

\begin{small}
\begin{align}   
   \mathcal{S}_{(6,4)}&=R^6+\frac{15 (D-1) D W_{a b c d} W^{a b c d} R^4}{(D-3) (D-2)}-\frac{60 (D-1) D Z^a_b Z_a^b R^4}{(D-2)^2}-\frac{24 (3 D-23) Z^a_b Z^c_d W\indices{_a_c^b^d} R^3}{D^{-2} (D-1)^{-2} (D-3) (D-2)^3}\notag\\
   &-\frac{2
   (D-1)^2 D^2 (D (3 (D-3) D-31)+57) W\indices{_a_b^c^d}W\indices{_c_d^e^f}W\indices{_e_f^a^b} R^3}{(D-3) (D-2) (D ((D-9)
   D+26)-22)}\notag\\
   &-\frac{4 (D-1)^2 D^2
   \left(3 D^2-80\right) Z^a_b Z^b_cZ_a^c R^3}{(D-2)^4}+\frac{6 (D-1)^2 D^2 \left(3 D^2-80\right) Z^a_b W_{a c d e}W^{bcde} R^3}{(D-4)
   (D-3) (D-2)^2}\notag\\
   &+\frac{9 (D-1)^2 D^3 (D ((D-6) (D-4) D+8)-33) \left(W_{a b c d} W^{a b c d}\right)^2 R^2}{2 (D-3)^2
   (D-2)^4}\notag\\
   &-\frac{6 (D-1)^3 D^2 \left(D^2 ((D-11) D+48)-480\right) \left(Z^a_b Z_a^b\right)^2 R^2}{(D-2)^6}\notag\\
   &+\frac{72
   (D-1)^3 D^3 (D (D+14)-92) Z^{a}_b Z_{a c} Z_{d e} W^{bdce} R^2}{(D-4) (D-3) (D-2)^4}\notag\\
   &+\frac{18 (D-1)^2 D^2
   (D (D (D ((D-13) D+17)+135)-200)+80) W_{a b c d} W^{a b c d}Z^e_f Z_e^f R^2}{(D-3) (D-2)^5}\notag\\
   &-\frac{12 (D-1)^3
   D^2 (D (D (4 D-39)+38)+228) Z^{ab} W_{acbd} W^{c efg} W^d{}_{efg} R^2}{(D-4) \left(D-2\right)^2 \left(D-3\right)^2}\notag\\
   &-\frac{3(D (D (D (D (3 D-55)+310)-630)+215)+237) W_{a b c d} W^{a b c d}W\indices{_e_f^g^h}W\indices{_g_h^i^j}W\indices{_i_j^e^f} R}{10(D-1)^{-3} D^{-4}(D-3)^2 (D-2)^3 (D
   ((D-9) D+26)-22)}\notag\\
   &-\frac{3(D (D (D (D (D (3
   D-73)+481)-927)-588)+1936)-912) W\indices{_a_b^c^d}W\indices{_c_d^e^f}W\indices{_e_f^a^b} Z^g_h Z_g^h R}{(D-1)^{-3} D^{-3}(D-3) (D-2)^4 (D ((D-9) D+26)-22)}\notag\\
   &+\frac{12 (D-1)^4
   D^3 (D (D (5 D-33)-192)+1104) Z^a_b Z_a^bZ^c_d Z^e_f W\indices{_c_e^d^f} R}{(D-3) (D-2)^6}\notag\\
   &+\frac{3
   (D (D (D (D (D (3 D-88)+587)-562)-3424)+3840)-1280) W_{a b c d} W^{a b c d}Z^e_f Z^f_gZ_e^g R}{(D-1)^{-3} D^{-3}(D-4) (D-3)
   (D-2)^6}\notag\\
   &-\frac{3 (D-1)^4 D^3 (D (D (D (D (3 D-58)+400)-1920)+2560)+10240) Z^a_b Z_a^bZ^c_d Z^d_eZ_c^e R}{5
   (D-4) (D-2)^7}\notag\\
   &+\frac{3 (D-1)^3 D^4 (D (D (D (9 D-176)+893)-990)-1584) W_{a b c d} W^{a b c d}Z^e_f W_{e g h i}W^{fghi} R}{4
   (D-4) (D-3)^2 (D-2)^4}\notag\\
   &-\frac{(D-1)^3 D^5 (D (3 D (D (D (3 D-35)+150)-270)+445)+51)
   \left(W_{a b c d} W^{a b c d}\right)^3}{8 (D-3)^3 (D-2)^6}\notag\\
   &+\frac{(D-1)^5 D^3 (D (3 D (3 D ((D-12)
   D+80)-1280)+6400)+12800) \left(Z^a_b Z_a^b\right)^3}{5 (D-2)^9}\notag\\
   &-\frac{3 (D (D (D (D (D (21
   D-298)+1333)-1000)-5024)+4480)-1280) W_{a b c d} W^{a b c d}\left(Z^e_f Z_e^f\right)^2}{2(D-1)^{-4} D^{-3}(D-3) (D-2)^8}\notag\\
   &-\frac{3(D (D (3
   D (D (D (13 D-172)+826)-1592)+1891)+2388)-1584) \left(W_{a b c d} W^{a b c d}\right)^2Z^e_f Z_e^f}{4 (D-1)^{-3} D^{-4}(D-3)^2 (D-2)^7}\notag\\
   &+\frac{6 (D-1)^5
   D^4 (3 D (D ((D-16) D+56)+192)-2944) Z^{a}_b Z_{a c} Z_{d e} W^{bdce} Z^f_g Z_f^g}{(D-4) (D-3)
   (D-2)^7}\notag\\
   &+\frac{(D (D (3 D (D (D (21 D-286)+1345)-2092)-3040)+8800)-3648)
   W\indices{_a_b^c^d}W\indices{_c_d^e^f}W\indices{_e_f^a^b}Z^g_h Z^h_iZ_g^i}{(D-1)^{-4} D^{-4}(D-4) (D-3) (D-2)^5 (D ((D-9) D+26)-22)}\notag\\
   &+\frac{3(D
   (D (D (87 D-950)+3463)-3784)-1896) W_{a b c d} W^{a b c d}Z^{ef} W_{egfh} W^{g ijk} W^h{}_{ijk}}{10(D-1)^{-4} D^{-4}(D-4) (D-3)^3 (D-2)^4}\, .
   \end{align}
\end{small}

\begin{small}
\begin{align}   
   \mathcal{S}_{(6,5)}&= R^6+\frac{15 (D-1) D W_{a b c d} W^{a b c d} R^4}{(D-3) (D-2)}-\frac{60 (D-1) D Z^a_b Z_a^b R^4}{(D-2)^2}\notag\\
   &-\frac{10
   (D-1)^2 D^2 (D ((D-3) D-5)+11) W\indices{_a_b^c^d}W\indices{_c_d^e^f}W\indices{_e_f^a^b} R^3}{(D-3) (D-2) (D ((D-9)
   D+26)-22)}\notag\\
   &-\frac{120 (D-5) (D-1)^2 D^2 Z^a_b Z^c_d W\indices{_a_c^b^d} R^3}{(D-3) (D-2)^3}-\frac{20 (D-4) (D-1)^2
   D^2 (D+4) Z^a_b Z^b_cZ_a^c R^3}{(D-2)^4}\notag\\
   &+\frac{30 (D-1)^2 D^2 (D+4) Z^a_b W_{a c d e}W^{bcde} R^3}{(D-3)
   (D-2)^2}\notag\\
   &+\frac{45 (D-1)^2 D^3 (D (D ((D-8) D+18)-4)-11) \left(W_{a b c d} W^{a b c d}\right)^2 R^2}{4 (D-3)^2
   (D-2)^4}\notag\\
   &-\frac{15 (D-4) (D-1)^3 D^2 (D ((D-5) D+12)+48) \left(Z^a_b Z_a^b\right)^2 R^2}{(D-2)^6}\notag\\
   &+\frac{1800
   (D-1)^3 D^3 Z^{a}_b Z_{a c} Z_{d e} W^{bdce} R^2}{(D-3) (D-2)^4}\notag\\
   &+\frac{45 (D-1)^2 D^2 (D (D (D ((D-9)
   D+11)+53)-80)+32) W_{a b c d} W^{a b c d}Z^e_f Z_e^f R^2}{(D-3) (D-2)^5}\notag\\
   &-\frac{60 (D-1)^3 D^2 (2 (D-3) D-11) Z^{ab} W_{acbd} W^{c efg} W^d{}_{efg}
   R^2}{\left(D-2\right)^2 \left(D-3\right)^2}\notag\\
   &-\frac{3  (D (D (D (D (3 D-35)+150)-270)+151)+17)
   W_{a b c d} W^{a b c d}W\indices{_e_f^g^h}W\indices{_g_h^i^j}W\indices{_i_j^e^f} R}{2 (D-1)^{-3} D^{-4} (D-3)^2 (D-2)^3 (D ((D-9) D+26)-22)}\notag\\
   &-\frac{15 (D-1)^3 D^3 \left((D
   (D (3 D-41)+197)-343) D^3+344 D-176\right) W\indices{_a_b^c^d}W\indices{_c_d^e^f}W\indices{_e_f^a^b} Z^g_h Z_g^h R}{(D-3) (D-2)^4 (D ((D-9)
   D+26)-22)}\notag\\
   &+\frac{60 (D-4) (D-1)^4 D^3 (D (D+3)-60) Z^a_b Z_a^bZ^c_d Z^e_f W\indices{_c_e^d^f} R}{(D-3) (D-2)^6}\notag\\
   &+\frac{15
   (D-1)^3 D^3 (D (D (D (D (3 D-32)+75)+126)-176)+64) W_{a b c d} W^{a b c d}Z^e_f Z^f_gZ_e^g R}{(D-3)
   (D-2)^6}\notag\\
   &-\frac{3 (D-1)^4 D^3 (D (D (D (3 D-26)+96)-256)-512) Z^a_b Z_a^bZ^c_d Z^d_eZ_c^e
   R}{(D-2)^7}\notag\\
   &+\frac{15 (D-1)^3 D^4 (D (D (9 D-64)+109)+66) W_{a b c d} W^{a b c d}Z^e_f W_{e g h i}W^{fghi} R}{4 (D-3)^2
   (D-2)^4}\notag\\
   &+\frac{5 (D-1)^3 D^5 (D (D (D (D ((D-15) D+90)-270)+405)-251)+32) \left(W_{a b c d} W^{a b c d}\right)^3}{8
   (D-3)^3 (D-2)^6}\notag\\
   &-\frac{(D-4) (D-1)^5 D^3 (D (D (D ((D-12) D+60)-200)+480)+640)
   \left(Z^a_b Z_a^b\right)^3}{(D-2)^9}\notag\\
   &+\frac{15 (D-4)(D (D (D (D ((D-16)
   D+83)-124)-200)+208)-64) W_{a b c d} W^{a b c d}\left(Z^e_f Z_e^f\right)^2}{2 (D-1)^{-4} D^{-3}  (D-3) (D-2)^8}\notag\\
   &+\frac{15  (D (D (D
   (D (D (D (3 D-52)+358)-1220)+2019)-1208)-148)+264) \left(W_{a b c d} W^{a b c d}\right)^2Z^e_f Z_e^f}{4(D-1)^{-3} D^{-4} (D-3)^2 (D-2)^7}\notag\\
   &-\frac{3 (D-1)^4 D^4 (D (D (D (8 D-85)+312)-401)-34) W_{a b c d} W^{a b c d}Z^{ef} W_{egfh} W^{g ijk} W^h{}_{ijk}}{2(D-3)^3 (D-2)^4}
   \notag\\
   &-\frac{5 
   (D (D (D (D (D (4 D-57)+288)-565)+102)+344)-176) W\indices{_a_b^c^d}W\indices{_c_d^e^f}W\indices{_e_f^a^b}Z^g_h Z^h_iZ_g^i}{(D-1)^{-4} D^{-4}(D-3) (D-2)^5 (D
   ((D-9) D+26)-22)}\notag\\
   &+\frac{30
    ((D-8) D (D+2)+160) Z^{a}_b Z_{a c} Z_{d e} W^{bdce} Z^f_g Z_f^g}{(D-1)^{-5} D^{-4}(D-3) (D-2)^7}\, .
   \end{align} 
\end{small}
   
\bibliographystyle{JHEP-2}
\bibliography{Gravities.bib}

\end{document}